\newcounter{casectr}
\newcommand{\rmv}[1]{}
\definecolor{purple}{rgb}{0.6,0,0.6}
\definecolor{darkgreen}{rgb}{0,0.4,0}
\algrenewcommand\algorithmicindent{3.7mm}
\newcounter{theorem}
\newcounter{NewCounter}
\newcounter{claimcount}[NewCounter]
\renewcommand{\p@claimcount}{\theNewCounter.}
\newcounter{cclaimcount}[claimcount]
\renewcommand{\p@cclaimcount}{\theNewCounter.\theclaimcount.}
\newtheorem{theorem}[NewCounter]{Theorem}
\newtheorem{corollary}[NewCounter]{Corollary}
\newtheorem{lemma}[NewCounter]{Lemma}
\newtheorem{observation}[NewCounter]{Observation}
\newtheorem{remark}[theorem]{Remark}
\newtheorem{definition}[NewCounter]{Definition}
\title{Faster Algorithms for Bounded-Difference Min-Plus Product}
\author{Shucheng Chi \thanks{chishuchengggg@icloud.com} \\ Tsinghua University
\and Ran Duan \thanks{duanran@mail.tsinghua.edu.cn} \\ Tsinghua University
\and Tianle Xie \thanks{xtl17@tsinghua.org.cn} \\ Tsinghua University} 
\begin{document}
	\maketitle
	
	\begin{abstract}
		Min-plus product of two $n\times n$ matrices is a fundamental problem in algorithm research. It is known to be equivalent to APSP, and in general it has no truly subcubic algorithms. In this paper, we focus on the min-plus product on a special class of matrices, called $\delta$-bounded-difference matrices, in which the difference between any two adjacent entries is bounded by $\delta=O(1)$. Our algorithm runs in randomized time $O(n^{2.779})$ by the fast rectangular matrix multiplication algorithm [Le Gall \& Urrutia 18], better than $\tilde{O}(n^{2+\omega/3})=O(n^{2.791})$ ($\omega<2.373$ [Alman \& V.V.Williams 20]). This improves previous result of $\tilde{O}(n^{2.824})$ [Bringmann et al. 16]. When $\omega=2$ in the ideal case, our complexity is $\tilde{O}(n^{2+2/3})$, improving Bringmann et al.'s result of $\tilde{O}(n^{2.755})$.
	\end{abstract}
	
	\newpage
	
	\section{Introduction}\label{section-introduction}

The min-plus product $C$ of two matrices $A$ and $B$ with sizes $a\times b$ and $b\times c$, denoted by $C=A\star B,$ is an $a\times c$ matrix where $C_{i,j}=\min_{1\le k\le b}\left\{A_{i,k}+B_{k,j}\right\}.$
Solving the min-plus product of two $n\times n$ square matrices is asymptotically equivalent to all pair shortest path (APSP) problem. Finding faster algorithms for APSP (so for min-plus product) has been a major problem for computer scientists. 
Recent breakthroughs include the one by Williams \cite{2018Faster}, which works in time $O(n^3/c^{\sqrt{\log{n}}}).$ However, it still remains open whether there exists a truly subcubic algorithm (i.e. running in time $O(n^{3-\epsilon})$ for some $\epsilon>0$) for APSP.

Fortunately, there are some existing truly subcubic algorithms for min-plus product when the structure of matrices is refined, and many of them can be transformed into subcubic algorithms for APSP in some situations. Noga Alon et al.~\cite{1997On} show that when the entries in the $n\times n$ matrices are integers with absolute value smaller than $W,$ the min-plus product can be solved in time $\tilde{O}\left(n^\omega\cdot W\right).$ Here $\omega<2.373$ is the cost of square matrix multiplication~\cite{1987Matrix, alman2020refined}. This result indicates a subcubic algorithm for min-plus product on square matrices when the elements are not larger than $O(n^{3-\omega-\epsilon}),$ for $\epsilon>0.$

Bringmann et al.~\cite{ref1} give a subcubic algorithm for min-plus product on bounded-difference matrices, which is defined as
\begin{definition}\label{ref1}
    A matrix is called $\delta$-bounded-difference if it satisfies that the difference between any two adjacent entries is bounded by $\delta.$ When $\delta=O(1),$ we call them bounded-difference matrices for brevity. Formally, A $\delta$-bounded-difference matrix $X$ satisfies that for all $ i,j, $ (when the elements are in $X$)
    \begin{equation*}
        \vert X_{i,j}-X_{i,j+1}\vert ~<~ \delta,
    \end{equation*}
    \begin{equation*}
        \vert X_{i,j}-X_{i+1,j}\vert ~<~ \delta.
    \end{equation*}
\end{definition}

Their algorithm~\cite{ref1} runs in time $\tilde{O}(n^{2.824}),$ given $\omega<2.373$~\cite{alman2020refined}. They also provide algorithm for released structure of bounded-difference matrices: either row or column of one matrix is bounded-difference.

Another significant work by Williams and Xu~\cite{2020Truly} provides an algorithm on less structured matrices. If two $n\times n$ matrices $A$ and $B$ satisfy that their entries are integers within $poly(\log{n})$ bits, and $B$ can be divided into $n^\delta\times n^\delta$ blocks with $W$-approximate rank at most $d=O(1),$ then given the rank decompositions, we can calculate the min-plus product in time $\tilde{O}\left(n^{3-\frac{\delta}{[d+1]/2}}+W^{1/4}n^{(9+\omega)/4}\right).$

\subsection{Our Result}


Let $\omega$ denote the exponent of time complexity of square matrix multiplication. For two rectangular matrices of sizes $n^a\times n^b$ and $n^b\times n^c,$ denote the time complexity for multiplying them by $\tilde{O}\left(n^{\omega(a,b,c)}\right)$. A current result on matrix multiplication gives $\omega<2.373, $ provided by Alman and V.V. Williams~\cite{alman2020refined}.

The main result of this paper is a faster algorithm for computing min-plus product on bounded-difference matrices.

\begin{theorem}
There is an algorithm to solve the min-plus product of any two bounded-difference $n\times n$ matrices, which works in randomized time $\tilde{O}\left(n^{2+\omega/3}\right)=\tilde{O}(n^{2.791})$, for $\omega<2.373$~\cite{alman2020refined}. By rectangular matrix multiplication algorithms~\cite{GU18}, the running time can be improved to $\tilde{O}(n^{2.779})$.
\end{theorem}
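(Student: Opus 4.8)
The plan is to decompose the middle summation index into blocks, reduce to a smaller bounded-difference instance solved recursively, and recover the exact answer by a refinement step built on fast (rectangular) matrix multiplication. After rescaling and rounding I may assume $\delta=1$ and that all entries are integers, at the cost of an $O(1)$ additive error that is absorbed at the end.

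Fix a block length $t=n^{\tau}$ and partition the middle index $[n]$ into $m=n/t$ contiguous blocks $K_1,\dots ,K_m$, writing $k_q=\min K_q$. For $k\in K_q$ the bounded-difference property gives $A_{i,k}=A_{i,k_q}+\alpha^{(q)}_{i,k}$ and $B_{k,j}=B_{k_q,j}+\beta^{(q)}_{k,j}$ with $|\alpha^{(q)}|,|\beta^{(q)}|\le t$, so
\[
 C_{i,j}\;=\;\min_{1\le q\le m}\Bigl[\bigl(A_{i,k_q}+B_{k_q,j}\bigr)+\bigl(\hat A^{(q)}\star\hat B^{(q)}\bigr)_{i,j}\Bigr],
\]
where $\hat A^{(q)}$ is $n\times t$, $\hat B^{(q)}$ is $t\times n$, both with entries in $[-t,t]$, and each bracketed inner term lies in $[-2t,2t]$. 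The matrices $\tilde A_{i,q}:=A_{i,k_q}$ and $\tilde B_{q,j}:=B_{k_q,j}$ are again bounded-difference with parameter $O(t)$, and $\tilde P:=\tilde A\star\tilde B=\min_q(A_{i,k_q}+B_{k_q,j})$ satisfies $|\tilde P_{i,j}-C_{i,j}|\le 2t$. I would compute $\tilde P$ recursively: after rescaling its entries by $1/t$ it is a bounded-difference instance whose relevant dimensions are $n\times m$ and $m\times n$, which splits into $O((n/m)^2)$ square sub-instances of side $m$ and is handled by recursion; the recursion tree has $O(\log\log n)$ levels and, level by level, its cost sums to $\tilde{O}(n^{2+\omega/3})$ provided each refinement step is within this budget. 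Thus the work is concentrated in the refinement step, which is the technical heart.

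Given $\tilde P$ — hence $C$ up to additive $O(t)$ — the refinement must produce $C$ exactly. The key observation is that a block $K_q$ can be optimal for $(i,j)$ only if $A_{i,k_q}+B_{k_q,j}\le\tilde P_{i,j}+O(t)$, i.e.\ only if $q$ is an $O(t)$-approximate witness of $(i,j)$ in the condensed product; moreover, for every such $q$ the contribution $A_{i,k_q}+B_{k_q,j}$ then lies within an interval of width $O(t)$ around $\tilde P_{i,j}$, and for $k\in K_q$ the entry $A_{i,k}$ lies within $O(t)$ of $A_{i,k_q}$. I would therefore bucket the candidate blocks according to the value of $A_{i_0,k_q}$ (for a representative row $i_0$ of the current block) into buckets of width $O(t)$: on each bucket, after a single scalar shift, the relevant part of $A$ and the relevant (capped) part of $B$ both have entry range $O(t)$, so that bucket's contribution is computed by one bounded-entry min-plus product via Alon--Galil--Margalit, realized in the rectangular regime $\omega(1,\cdot,1)$. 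Output entries with only few candidate blocks are instead resolved directly, by brute force over the $O(t)$ indices of each candidate block or by small rectangular products. Finally, so that the candidate lists are discovered without the prohibitive $\tilde{O}(n^2 m)$ cost of inspecting all block contributions, a randomized sampling subroutine (in the spirit of the random-sampling technique for few-witnesses problems) is used, and a union/Chernoff bound gives the stated high-probability guarantee.

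Putting the pieces together, I would choose $\tau$ to balance the recursion against the two halves of the refinement — $\tau=1/3$ (so $t=n^{1/3}$, $m=n^{2/3}$) in the ideal case $\omega=2$, and $\tau$ of the corresponding constant order in general — obtaining $\tilde{O}(n^{2+\omega/3})=\tilde{O}(n^{2.791})$; replacing every matrix product above by rectangular multiplication and optimizing the block length against the bounds $\omega(1,\cdot,1)$ of Le Gall--Urrutia gives the further improvement to $\tilde{O}(n^{2.779})$. I expect the main obstacle to be the refinement analysis: proving, from the bounded-difference hypothesis alone, a bound on the number of surviving approximate-witness triples that is small enough for the ``few candidates'' and ``many candidates'' regimes to meet within budget, and arranging the bucketing and the random sampling so that no near-optimal witness is lost and no region is recomputed too many times — this is exactly where the structure of the problem must be used most carefully.
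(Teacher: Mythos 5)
There is a genuine gap, and it sits exactly where you yourself flag the ``main obstacle'': the refinement step for entries with many approximate-witness blocks. Two concrete problems. First, your bucketing does not yield bounded-entry instances. Fixing a representative output row $i_0$ and a bucket of values $A_{i_0,k_q}$ of width $O(t)$ constrains, for each output column $j$, the relevant entries $B_{k_q,j}$ only to an interval of width $O(t)$ \emph{centered at} $\tilde P_{i_0,j}-A_{i_0,k_q}$; this center varies with $j$ over a range that can be $\Theta(n)$, so after ``a single scalar shift'' the $B$-part of the bucket does \emph{not} have range $O(t)$, and the Alon--Galil--Margalit product cannot be applied per bucket as claimed (one would need a shift of column $j$ of $B$ depending on both $j$ and the output row block, i.e.\ a fresh shifted copy of $B$ per output block, which you do not account for). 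Second, the quantity you hope to bound --- the number of surviving approximate-witness triples $(i,j,q)$ with $A_{i,k_q}+B_{k_q,j}\le\tilde P_{i,j}+O(t)$ --- can genuinely be $\Theta(n^2 m)$ (take $A,B$ essentially constant), so no counting argument from bounded difference alone can make a scheme work that charges many-candidate entries per witness; the ``few candidates / many candidates'' split as you describe it cannot close within the $\tilde O(n^{2+\omega/3})$ budget.

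The paper's proof supplies precisely the machinery your outline omits for the many-candidate regime: sample $\tilde O(n^{\theta-\beta})$ columns $r$ so that with high probability every large candidate set contains a sampled $r$; subtract the column $A_{\cdot,r}$ and the row $B_{r,\cdot}$ (Fredman's trick) so that all useful pairs have sums of absolute value $O(t)$; segment each block-column of $A^r$ by value and pair it with its (essentially unique, after a threefold duplication) corresponding segment in the block-row of $B^r$; pack all segments at random into the columns of a single rectangular matrix with bounded-degree polynomial entries, perform one rectangular multiplication per $r$, and then find and subtract the expected $\tilde O(n^{\gamma})$ ``collisions'' per output block, charging each output block to only one sampled $r$. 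The stated $\tilde O(n^{2+\omega/3})$ bound further needs the recursive halving of block sizes with nesting of candidate sets and a nested random allocation so that collisions at scale $l/2$ are located inside collisions at scale $l$. Your outer decomposition (condensed product as an $O(t)$-approximation, computed recursively) is compatible with the paper --- though the paper simply computes the approximation by brute force on representatives, which is cheap --- but without the sampling/segment-pairing/packing/collision-subtraction step, and without the recursive candidate-set refinement, the proof does not go through, and the final parameter balancing you sketch has no established costs to balance.
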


Our algorithm is faster than the one given by Bringmann et al.~\cite{ref1}. Another point is that the time complexity can be expressed in a clearer form on $\omega$.



Min-plus product on bounded-difference matrices has many applications in practice, and on many interdisciplinary scenarios. \cite{ref1} has shown that the scored parsing and language edit distance for context-free grammars, the RNA-folding problem in biology, and the optimum stack generation problem can all be reduced to min-plus product for bounded-difference matrices.

\subsection{Overview of our algorithms}

In this paper we first introduce a simpler version of our algorithm, which only partition the matrices once. Then we make the algorithm faster by recursively partition the matrices. Let $A,B$ be two $n\times n$ bounded-difference matrices and we want to compute $C=A\star B$. To make the overview clearer, we assume $\omega=2$.

\paragraph{The basic algorithm.}


As in~\cite{ref1}, in the beginning we divide the matrices $A,B$ into blocks of size $n^{0.1}\times n^{0.1}$. The upper-left entry in each block is considered as the representative entry in that block, and since the matrix is bounded-difference, we can approximately think that the entries in each block have similar values. We can thus calculate approximate results for all entries in $C$ in $O(n^{2.7})$ time, and what remains is to get the exact $C$. A concept we need here is the candidate set. Intuitively, from the approximate solution, we can find the candidate set $K(i,j)$ composed of blocks which may contain the entries $k$ that gives $C_{i,j}=A_{i,k}+B_{k,j}.$ We can judge whether an block is in the candidate set only by the representative elements of blocks. If the candidate set is small ($<n^{0.6}$), we just need to enumerate all of its entries to find the optimal one.  The time is $\tilde{O}(n^{2.7})$, since there are $O(n^{1.8})$ blocks in $C$ and trivial computation of each block takes $O(n^{0.3})$ time. For the remaining large candidate sets, we randomly sample $\tilde{O}(n^{0.3})$ columns in $A$, and for every sampled column, reduce all elements in $A$ by that column. Since each of the remaining candidate sets is larger than $n^{0.6}$, with high probability we have selected at least one column for every block of $C$ which is in its candidate sets . 

Inspired by~\cite{Fredman1976}, for a randomly chosen column $r$, the matrices $A^r,B^r$ are constructed by $A^r_{i,k}=A_{i,k}-A_{i,r}$ and $B^r_{k,j}=B_{k,j}-B_{r,j}$. If $r$ is in the candidate set for the block of $C_{i,j}$, the blocks $A_{i,k}^r$ and $B_{k,j}^r$ possibly leading to a correct value in $C$ should sum up to a value close to zero. We call such pairs of blocks ``diametric blocks''. Further we define segment in $A$ as a set of blocks in the same columns that contain values in a small range, and similarly diametric segments in rows of $B$ are defined. An important observation here is that based on the bounded-difference property, we can treat the relationship of diametric as an almost one-to-one map (by duplicating each entries no more than three times which does not affect the asymptotic time complexity). 

Our task now is to calculate the min-plus product of each pairs of diametric segments. We call a segment ``large'' if it contains $>n^{0.6}$ blocks, then the total number of large segments is $O(n^{1.2})$. Let $A', B'$ be two matrices of size $n\times n^{1.3}$ and $n^{1.3}\times n$ respectively (of $n^{0.9}\times n^{1.2}$ and $n^{1.2}\times n^{0.9}$ blocks) and each large segment in $A^r$ is allocated to a separate column of blocks in $A'$, and its diametric segment in $B$ is put in the corresponding row of blocks in $B'$. Then we can transform the entries of the new matrix to small values (between $-O(n^{0.1})$ and $O(n^{0.1})$), and compute the matrix product of the two matrices of polynomials. The running time is $\tilde{O}(n^{2.4})$ for each $r$, so $\tilde{O}(n^{2.7})$ in total. 

However, for small segments of $A^r$, if we randomly allocate them to $A'$, it is possible that two segments collide in the same column. We handle this by polynomials: two polynomials simply adds up in a collision. But if we calculate the matrix product of $A'$ and $B'$ with polynomial entries, we will get many terms coming from unwanted collisions. Now what remains is to deal with the collisions. However, since we randomly allocate the segments, the collisions are fairly evenly distributed in the result $C'=A'\cdot B'$, and we can see that each block in $C'$ will get $\tilde{O}(n^{0.6})$ collisions in expectation. And for every block in $C$, we only need to compute it in one of the sampled $r$, so subtracting collisions for all the $O(n^{1.8})$ blocks takes $\tilde{O}(n^{1.8+0.6+0.3})=\tilde{O}(n^{2.7})$ time, where every collision of blocks is computed in the trivial way in $O(n^{0.3})$ time. Finding collisions are also done in the trivial way.


\paragraph{The improved recursive algorithm.}

The improved recursive algorithm working in time $\tilde{O}(n^{2+2/3})$ follows the structure of the basic algorithm. However, more delicate techniques are applied. The core idea here is some kind of recursion: we can divide blocks into halves repeatedly, until each block has a good property or it is of constant size. This procedure takes no more than $O(\log{n})$ iterations. 

First we can partition the matrices into blocks of size $n^{1/3}\times n^{1/3}$, and in every iteration each block is divided into 4 sub-blocks. In every iteration if the size of the candidate set is larger than $n^{2/3}$, we compute it by the sampling procedure in the previous algorithm (for the current block size). If the size of the candidate set is no larger than $n^{2/3}$, we find the candidate set for smaller blocks, which can be done by simply enumerate all blocks in the current candidate sets, until in some iteration the candidate set becomes larger than $n^{2/3}$. If the candidate set is still small when the block size reaches constant, we can compute it by exhaustive search.

The main problem for this procedure is how to find the collisions when block size becomes smaller, since exhaustive search will cost more. We design a more careful recursive allocation method so that each time we only need to find collisions inside the collisions of larger blocks.

\subsection{Related work}
There are already many graph algorithms that can be accelerated by fast matrix multiplication. The all-pair shortest path problem for unweighted undirected graphs can be solved in $\tilde{O}(n^{\omega})$ time~\cite{Seidel95}, while the time complexity for APSP in unweighted directed graphs is $\tilde{O}(n^{\mu})$~\cite{Zwick98}, where $\mu<0.5286$ is closely related to rectangular matrix multiplication algorithms~\cite{GU18}. The dominance product has algorithm with complexity $O(n^{(3+\omega)/2})=O(n^{2.686})$~\cite{Matousek91}, and has been slightly improved by rectangular matrix multiplication~\cite{Yuster09}. The complexity of $O(n^{(3+\omega)/2})$, which is the exponential ``middle point'' between cubic time and FMM, is the current time complexity for many problems, such as all-pair bottleneck path~\cite{VWY07,DP09b}, all-pair non-decreasing path~\cite{Williams10,DJW}, approximate APSP without scaling~\cite{BKW19}. Recently, solving linear program is also shown to have current running time same as FMM~\cite{CLS19,JSWZ21}. 

\subsection{Structure of our paper}
The structure of this paper goes as follows. We first introduce a preliminary version of our algorithm in Section~\ref{section-basicsettings}. This algorithm works in randomized time $O(n^{2.7})$ if $\omega=2$ and $\tilde{O}(n^{2.8062})$ by fast rectangular matrix multiplications~\cite{gall12, GU18}. This basic algorithm contains most of the techniques we apply in our algorithms, and by itself it is fast and worth exploring. Then the ultimate version of our algorithm appears in Section~\ref{section-new}, which works in time $\tilde{O}(n^{2+\omega/3})$. 

	\section{Basic algorithm}\label{section-basicsettings}
	
	In this section, we examine the min-plus product of two $\delta$-bounded difference matrices with size $n \times n$. We denote these two matrices as $A$ and $B$, and the min-plus product of them as $C$. For brevity, we consider only the case where $n$ is an integer power of two; further, for brevity we assume all powers $n^{q}$ be integer powers of two. 
	
	As in~\cite{ref1}, we take a constant $\alpha\in(0, 1)$ and divide the matrices $A, B$ into $n^{\alpha}$ rows and $n^{\alpha}$ columns, so there are in total $n^{2\alpha}$ blocks of small matrices with size $n^{1-\alpha}\times n^{1-\alpha}$. We also divide $[n]=\{1,2,\cdots, n\}$ into intervals of length $n^{1-\alpha}$. For an index $i$, we define $I(i)$ as the interval of length $n^{1-\alpha}$ it lies in, and $i'$ be the first element in the interval $I(i)$, which is called the \emph{representative element} of the interval. As a result, for any element $A_{i,j}$ in matrix $A,$ $A_{i',j'}$ denotes the upper-left element of the small matrix block it lies in, and call it the \emph{representative element} of the small matrix block. 
	We denote $[n]'$ as the set of all representative elements in $[n]$, i.e., $[n]' = \{1, n^{1-\alpha} + 1, 2n^{1-\alpha} + 1, \cdots, n-n^{1-\alpha}+1 \}. $
	Further, note that $A_{I(i), I(j)}$ is the small matrix block where $A_{i, j}$ is located, and also $A_{I(i), I(j)}= A_{I(i'), I(j')}. $
	
	We construct the approximation matrix $\tilde C$ of the matrix $C$ as follows.
	$$\tilde C_{i, j} ~=~ \min_{k'\in [n]'} \{ A_{i', k'}+ B_{k', j'}\}. $$ 
	In fact, in this step we can only consider the representative elements of each small matrix block of $\tilde{C}$ to estimate the solution of the original problem by computing the min-plus product of two $\delta n^{1-\alpha}$-bounded difference matrices with size $n^{\alpha} \times n^{\alpha}$. The following lemma makes use of the fact that the difference of the elements in each small matrix block are relatively small, which in turn shows that $\tilde C$ is a good approximation to the matrix $C$: 
	
	\begin{lemma}\label{05_approx}
		For $ i, j, k \in [n]$,
		\begin{align*}
		|A_{i, k} - A_{i', k'}| ~\leq~& 2\delta n^{1- \alpha},\\
		|B_{k, j} - B_{k', j'}| ~\leq~& 2\delta n^{1- \alpha},\\
		|C_{i, j} - C_{i', j'}| ~\leq~& 2\delta n^{1- \alpha},\\
		|C_{i, j} - \tilde C_{i, j}| ~\leq~& 4\delta n^{1- \alpha}.
		\end{align*}
	\end{lemma}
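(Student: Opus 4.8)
The plan is to establish the four inequalities one at a time, each by a short telescoping (``grid-walk'') argument that invokes the $\delta$-bounded-difference hypothesis once per step, and then to bootstrap the last two estimates from the first two together with the triangle inequality.

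First I would handle the bounds on $A$ and $B$. Fix $i,j,k\in[n]$. Since $i'$ and $i$ lie in the same interval $I(i)$ of length $n^{1-\alpha}$, we have $|i-i'|\le n^{1-\alpha}-1$, and likewise $|k-k'|\le n^{1-\alpha}-1$; every index on the way from $i$ to $i'$ (resp.\ from $k$ to $k'$) is still in $[n]$, so all entries referenced below exist. Walking from $(i,k)$ to $(i',k)$ one row at a time and then from $(i',k)$ to $(i',k')$ one column at a time, and applying the bound $\delta$ to each of the at most $2(n^{1-\alpha}-1)$ steps, yields
\[
|A_{i,k}-A_{i',k'}|\;\le\;|A_{i,k}-A_{i',k}|+|A_{i',k}-A_{i',k'}|\;<\;2(n^{1-\alpha}-1)\delta\;<\;2\delta n^{1-\alpha}.
\]
The same path argument applied to $B$ (walk first in the $k$-coordinate, then in the $j$-coordinate) gives $|B_{k,j}-B_{k',j'}|<2\delta n^{1-\alpha}$.

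For the third inequality I would first observe that $C$ is itself $\delta$-bounded-difference: if $k^{*}$ attains $C_{i,j}=A_{i,k^{*}}+B_{k^{*},j}$, then $C_{i,j+1}\le A_{i,k^{*}}+B_{k^{*},j+1}<C_{i,j}+\delta$, and symmetrically $C_{i,j}<C_{i,j+1}+\delta$, so $|C_{i,j}-C_{i,j+1}|<\delta$; the vertical estimate is analogous. The very same grid-walk argument then gives $|C_{i,j}-C_{i',j'}|<2\delta n^{1-\alpha}$. (Alternatively one can avoid the intermediate claim: let $k^{*}$ minimise for $C_{i,j}$, so $C_{i',j'}\le A_{i',k^{*}}+B_{k^{*},j'}$, bound $|A_{i,k^{*}}-A_{i',k^{*}}|$ and $|B_{k^{*},j}-B_{k^{*},j'}|$ by $(n^{1-\alpha}-1)\delta$ each via one-leg paths, and symmetrise.) For the fourth inequality, note that $\tilde C_{i,j}$ depends only on $i'$ and $j'$, so $\tilde C_{i,j}=\tilde C_{i',j'}$, and it suffices to compare $\tilde C_{i',j'}$ with $C_{i',j'}$. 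Since $\tilde C_{i',j'}$ is a minimum over the sub-collection $[n]'\subseteq[n]$ of indices, $\tilde C_{i',j'}\ge C_{i',j'}$; and if $k^{*}\in[n]$ attains $C_{i',j'}$, rounding it to $\hat k=(k^{*})'\in[n]'$ gives
\[
\tilde C_{i',j'}\;\le\;A_{i',\hat k}+B_{\hat k,j'}\;\le\;\bigl(A_{i',k^{*}}+(n^{1-\alpha}-1)\delta\bigr)+\bigl(B_{k^{*},j'}+(n^{1-\alpha}-1)\delta\bigr)\;<\;C_{i',j'}+2\delta n^{1-\alpha}.
\]
Hence $|\tilde C_{i',j'}-C_{i',j'}|<2\delta n^{1-\alpha}$, and combining with the third bullet,
\[
|C_{i,j}-\tilde C_{i,j}|\;\le\;|C_{i,j}-C_{i',j'}|+|C_{i',j'}-\tilde C_{i',j'}|\;<\;2\delta n^{1-\alpha}+2\delta n^{1-\alpha}\;=\;4\delta n^{1-\alpha}.
\]

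I do not expect a genuine obstacle here; the statement is essentially a bookkeeping exercise. The only points that need care are that all intermediate grid entries in the telescoping paths actually exist (they do, because $i'$ and $k'$ never leave $[n]$), and keeping the two directions of the $\tilde C$-estimate straight — namely that restricting the minimisation to $[n]'$ can only \emph{increase} the value, while rounding the true minimiser to its block representative increases the attained sum by at most $2(n^{1-\alpha}-1)\delta$.
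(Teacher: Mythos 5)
Your proof is correct, and every estimate is at least as tight as what the lemma demands; but it is organized a bit differently from the paper's argument, so a short comparison is worth making. For the first two bounds you do exactly what the paper does (two-leg telescoping inside a block). For the $C$-bound, the paper does not prove that $C$ is itself $\delta$-bounded-difference; instead it fixes the true minimizer $k_1$ of $C_{i,j}$ and the representative minimizer $k_2$ of $\tilde C_{i',j'}$ and writes three one-line chains, e.g.\ $C_{i',j'}\le A_{i',k_1}+B_{k_1,j'}\le C_{i,j}+2\delta n^{1-\alpha}$, then symmetrizes. Your primary route — first showing $|C_{i,j}-C_{i,j+1}|<\delta$ and $|C_{i,j}-C_{i+1,j}|<\delta$ via the minimizer of one side, then grid-walking inside $C$ — is a genuinely different (and reusable) intermediate fact; your parenthetical alternative is essentially the paper's chain. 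For the $\tilde C$-bound you also split differently: you compare $\tilde C_{i',j'}$ with $C_{i',j'}$ using the observation that $[n]'\subseteq[n]$ forces $\tilde C_{i',j'}\ge C_{i',j'}$ together with rounding the minimizer to its representative (a one-leg move in the $k$-coordinate only, hence $2\delta n^{1-\alpha}$), and then add the $|C_{i,j}-C_{i',j'}|$ bound; the paper instead directly sandwiches $C_{i,j}$ against $\tilde C_{i',j'}$ with the chains $C_{i,j}\le \tilde C_{i',j'}+2\delta n^{1-\alpha}$ and $\tilde C_{i',j'}\le C_{i,j}+4\delta n^{1-\alpha}$, inequalities it reuses in the spirit of Corollary~\ref{chap05_cor1} and Lemma~\ref{05_sim}. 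What your version buys is two clean structural facts ($C$ inherits the bounded-difference property up to the strict/non-strict constant, and $\tilde C$ dominates $C$ at representative entries) and slightly sharper constants in intermediate steps; what the paper's version buys is that its intermediate chains are exactly the one-sided estimates needed later, without introducing auxiliary claims. Both are complete proofs of the stated bounds.
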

	\begin{proof}
		The proofs of the first two assertions are symmetric. Since $A_{i, k}$ and $A_{i', k'}$ are in the same small matrix block $A_{I(i), I(k)}$, the difference between the row order of these two elements $|i - i'|$ is no more than $ n^{1-\alpha}. $ Combining the bounded-difference property, we get that $|A_{i, k} - A_{i', k}|~\leq~\delta n^{1-\alpha}. $ Similarly, we have $|A_{i', k} - A_{i', k'}|~\leq~\delta n^{1-\alpha}. $ Combining these two equations we get 
		$$|A_{i, k} - A_{i', k'}| ~\leq~ |A_{i, k} - A_{i', k}| + |A_{i', k} - A_{i', k'}| ~\leq~ 2\delta n^{1- \alpha}. $$
		
		For the two assertions on $C$, let $k_1= \arg \min_{k\in [n]} \{A_{i, k} + B_{k, j}\}$ be the index that really achieve the min-plus product $C_{i,j}$. Similarly, let $k_2= \arg \min_{k'\in [n]'} \{A_{i', k'} + B_{k', j'}\}. $ By definition we have $C_{i, j} = A_{i, k_1} + B_{k_1, j},\;\; \tilde C_{i', j'} = A_{i', k_2} + B_{k_2, j'}. $ 
		Combining the two equations we have 
		\begin{align*}
		C_{i', j'} ~\leq~& A_{i', k_1} + B_{k_1, j'} ~\leq~ A_{i, k_1} + B_{k_1, j} + 2\delta n^{1-\alpha} = C_{i, j} + 2\delta n^{1-\alpha}\\
		C_{i, j} ~\leq~& A_{i, k_2} + B_{k_2, j} ~\leq~ A_{i', k_2} + B_{k_2, j'} + 2\delta n^{1-\alpha}= \tilde C_{i', j'} + 2\delta n^{1-\alpha}\\\ 
		\tilde C_{i', j'} ~\leq~& A_{i', k'_1} + B_{k'_1, j'} ~\leq~ A_{i, k_1} + B_{k_1, j} + 4\delta n^{1-\alpha} = C_{i, j} + 4\delta n^{1- \alpha}
		\end{align*}
		Symmetrically, 
		we can prove $|C_{i, j} - C_{i', j'}| \le 2\delta n^{1- \alpha},\;\; |C_{i, j} - \tilde C_{i, j}| \le 4\delta n^{1- \alpha}. $ 
	\end{proof}
	
	\vspace{2ex}
	
	Define $K(i', j') = \{k'| A_{i', k'} + B_{k', j'} \le \tilde C_{i', j'} + 8\delta n^{1- \alpha} \}$ as the candidate set of the pair $(i',j')$.
	Corollary~\ref{chap05_cor1} below shows that the $k$ that achieves the optimal value for $C_{i,j}$ is in $K(i', j')$, and thus we only need to focus on the candidate set $K(i', j')$ for every pair $(i',j')$.
	
	\begin{corollary}\label{chap05_cor1}
		For $i, j, k \in [n],$ if $C_{i, j} = A_{i, k} + B_{k, j},$ then $k' \in K(i', j'). $
	\end{corollary}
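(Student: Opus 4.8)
The plan is to apply Lemma~\ref{05_approx} directly: bound $A_{i', k'} + B_{k', j'}$ from above in terms of $A_{i,k} + B_{k,j} = C_{i,j}$, and then relate $C_{i,j}$ to $\tilde C_{i',j'}$, so that the membership condition defining $K(i',j')$ is satisfied.

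First I would note that $\tilde C_{i,j}$, as defined, depends only on the block indices $i'$ and $j'$ (the minimum is over $k'\in[n]'$ of $A_{i',k'}+B_{k',j'}$), hence $\tilde C_{i,j} = \tilde C_{i',j'}$. Next, using the first two inequalities of Lemma~\ref{05_approx} together with the hypothesis $C_{i,j} = A_{i,k} + B_{k,j}$,
\[
A_{i', k'} + B_{k', j'} \;\le\; \bigl(A_{i,k} + 2\delta n^{1-\alpha}\bigr) + \bigl(B_{k,j} + 2\delta n^{1-\alpha}\bigr) \;=\; C_{i,j} + 4\delta n^{1-\alpha}.
\]
Then, from the fourth inequality of Lemma~\ref{05_approx}, $C_{i,j} \le \tilde C_{i,j} + 4\delta n^{1-\alpha} = \tilde C_{i',j'} + 4\delta n^{1-\alpha}$. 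Chaining the two bounds gives $A_{i',k'} + B_{k',j'} \le \tilde C_{i',j'} + 8\delta n^{1-\alpha}$, which is exactly the condition for $k' \in K(i',j')$.

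There is essentially no obstacle here beyond bookkeeping: the only thing to check is that the slack $8\delta n^{1-\alpha}$ built into the definition of the candidate set is large enough to absorb both the $4\delta n^{1-\alpha}$ error from replacing entries by their block representatives and the $4\delta n^{1-\alpha}$ gap between $C$ and $\tilde C$ — and the constant was evidently chosen with exactly this in mind. No randomness or additional structural properties are needed; the corollary is a direct consequence of the approximation bounds already established in Lemma~\ref{05_approx}.
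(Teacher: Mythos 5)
Your proof is correct and follows essentially the same route as the paper: bound $A_{i',k'}+B_{k',j'}$ by $C_{i,j}+4\delta n^{1-\alpha}$ via the first two inequalities of Lemma~\ref{05_approx}, then use $C_{i,j}\le \tilde C_{i,j}+4\delta n^{1-\alpha}$ and $\tilde C_{i,j}=\tilde C_{i',j'}$ to land within the $8\delta n^{1-\alpha}$ slack defining $K(i',j')$. No issues.
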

	
	\begin{proof}
		By Lemma \ref{05_approx},
		$$A_{i', k'} + B_{k', j'} ~\leq~ A_{i, k} + B_{k, j} + 4\delta n^{1-\alpha} ~=~ C_{i, j} + 4\delta n^{1-\alpha} ~\leq~ \tilde C_{i, j} + 8\delta n^{1-\alpha}. $$
		Thus $k' \in K(i', j')$ since $\tilde{C}_{i',j'}=\tilde{C}_{i,j}.$
	\end{proof}
	
	Computing all $\tilde{C}_{i,j}$ and $K(i',j')$ can be done in the trivial way: enumerate all $i',k',j'$, which takes $O(n^{3\alpha})$ time. 
	Take a constant $\beta\in (0, 1)$. For all pairs $(i', j')$ satisfying $|K(i', j')|\le n^{\beta}$, for every $k'\in K(i',j')$, compute the min-plus product of the blocks $A_{I(i'),I(k')}$ and $B_{I(k'),I(j')}$ in the trivial way. Every min-plus product of small blocks takes $O(n^{3(1-\alpha)})$ time, and there are $O(n^{2\alpha})$ blocks of $C$, each with $O(n^{\beta})$ computation of blocks, thus the total time is bounded by $O(n^{3-\alpha+\beta}). $

	Now we need to deal with those $C_{i,j}$ satisfying $|K(i', j')|> n^{\beta}$. This part is the core of our algorithm. Inspired by~\cite{Fredman1976}, we randomly pick some $r'$s and reduce all elements in $A$ by the column $r$ of $A$ to get $A^r$, and reduce all elements in $B$ by the row $r$ of $B$ to get $B^r$. If $r\in K(i',j')$, which means $A_{i',r}+B_{r,j'}$ is an approximation of $C_{i',j'}=A_{i',k}+B_{k,j'}$, then $A^r_{i',k}+B^r_{k,j'}$ has a small absolute value, as shown below.
	\begin{lemma}\label{05_sim}
		For $ i, j, k_1, k_2\in [n], $ if~$ k_1', k_2' \in K(i', j'),$
		\begin{align*}
		|A_{i, k_1} + B_{k_1, j} - A_{i, k_2} - B_{k_2, j}| ~\le~ 16\delta n^{1- \alpha}.
		\end{align*}
	\end{lemma}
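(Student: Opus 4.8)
The plan is a two–stage triangle–inequality argument: first control the gap between the two \emph{representative} sums $A_{i',k_1'}+B_{k_1',j'}$ and $A_{i',k_2'}+B_{k_2',j'}$ using the defining inequality of the candidate set, then transfer this bound to the original (unprimed) sums using Lemma~\ref{05_approx}.

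First I would observe that, by the definition of $\tilde C_{i',j'}$ as a minimum over $[n]'$, we have $\tilde C_{i',j'}\le A_{i',k_1'}+B_{k_1',j'}$ and $\tilde C_{i',j'}\le A_{i',k_2'}+B_{k_2',j'}$. On the other hand, since $k_1',k_2'\in K(i',j')$, the definition of the candidate set gives $A_{i',k_\ell'}+B_{k_\ell',j'}\le \tilde C_{i',j'}+8\delta n^{1-\alpha}$ for $\ell=1,2$. Hence both quantities lie in the interval $[\tilde C_{i',j'},\,\tilde C_{i',j'}+8\delta n^{1-\alpha}]$, so
\[
\bigl|A_{i',k_1'}+B_{k_1',j'}-A_{i',k_2'}-B_{k_2',j'}\bigr|~\le~8\delta n^{1-\alpha}.
\]

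Next I would pass from the representative elements to the actual ones. By the first two inequalities of Lemma~\ref{05_approx}, $|A_{i,k_\ell}-A_{i',k_\ell'}|\le 2\delta n^{1-\alpha}$ and $|B_{k_\ell,j}-B_{k_\ell',j'}|\le 2\delta n^{1-\alpha}$, so for each $\ell$ we get $|A_{i,k_\ell}+B_{k_\ell,j}-(A_{i',k_\ell'}+B_{k_\ell',j'})|\le 4\delta n^{1-\alpha}$. Combining this for $\ell=1,2$ with the displayed bound above via the triangle inequality yields
\[
\bigl|A_{i,k_1}+B_{k_1,j}-A_{i,k_2}-B_{k_2,j}\bigr|~\le~4\delta n^{1-\alpha}+8\delta n^{1-\alpha}+4\delta n^{1-\alpha}~=~16\delta n^{1-\alpha},
\]
which is the claim. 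I do not anticipate a genuine obstacle here; the only thing to be careful about is bookkeeping the constants ($2\delta n^{1-\alpha}$ per entry, $8\delta n^{1-\alpha}$ in the candidate-set slack) so that they add up exactly to $16\delta n^{1-\alpha}$, and making sure the minimality of $\tilde C_{i',j'}$ is invoked to sandwich both representative sums from below.
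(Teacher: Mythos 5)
Your proof is correct, and the constant bookkeeping ($4+8+4=16$) checks out. It is, however, organized slightly differently from the paper's argument: the paper sandwiches both \emph{actual} sums $A_{i,k_\ell}+B_{k_\ell,j}$ ($\ell=1,2$) in the single interval $[C_{i,j},\,C_{i,j}+16\delta n^{1-\alpha}]$, using $C_{i,j}$ as the common lower bound and a chain $A_{i,k_\ell}+B_{k_\ell,j}\le A_{i',k_\ell'}+B_{k_\ell',j'}+4\delta n^{1-\alpha}\le \tilde C_{i',j'}+12\delta n^{1-\alpha}\le C_{i,j}+16\delta n^{1-\alpha}$, which invokes the fourth inequality of Lemma~\ref{05_approx} (the $|C-\tilde C|$ bound) and Corollary~\ref{chap05_cor1}. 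You instead sandwich the \emph{representative} sums in $[\tilde C_{i',j'},\,\tilde C_{i',j'}+8\delta n^{1-\alpha}]$, using only the minimality of $\tilde C_{i',j'}$ over $[n]'$ and the candidate-set definition, and then transfer by two $4\delta n^{1-\alpha}$ entry-wise corrections from the first two inequalities of Lemma~\ref{05_approx}. Your route is marginally more self-contained, since it never refers to $C_{i,j}$, Corollary~\ref{chap05_cor1}, or the $C$-versus-$\tilde C$ approximation; the paper's version has the side benefit of explicitly locating both sums within $16\delta n^{1-\alpha}$ of the true optimum $C_{i,j}$, which is the form of the estimate reused later in the algorithm's correctness discussion. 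Either proof is fully valid for the stated lemma.
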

	\begin{proof}
		By Lemma~\ref{05_approx} and Corollary~\ref{chap05_cor1},
		$$C_{i, j} ~\leq~ A_{i, k_1} + B_{k_1, j} ~\leq~ A_{i', k'_1} + B_{k_1', j'} + 4\delta n^{1-\alpha} ~\leq~ \tilde C_{i', j'} + 12\delta n ^{1-\alpha} ~\leq~ C_{i, j} + 16\delta n^{1-\alpha}. $$
		Similarly, $$ A_{i, k_2} + B_{k_2, j} ~\in~ [C_{i, j},\;\; C_{i, j} + 16\delta n^{1-\alpha}]. $$ Thus the difference between them is at most the length of the interval $16\delta n^{1-\alpha}. $
	\end{proof}
	
	\vspace{2ex}
	
	
	We randomly sample $c_0\log n\cdot n^{\alpha- \beta}$ numbers among $[n]'$, where $c_0$ is a constant, and denote the sampling set by $R$. Lemma \ref{05_lem_3} shows that, with high probability, at least one number in every large candidate set is chosen in $R$. 
	
	\begin{lemma}\label{05_lem_3}
		With probability $1-n^{-\Omega(1)}$, for $ i', j'\in [n]'$, if $K(i', j') > n^{\beta},$ then
		$$K(i', j') \cap R ~\not =~ \emptyset.$$
	\end{lemma}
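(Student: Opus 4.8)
The plan is to run a standard union-bound-over-bad-events argument. Fix a pair $(i', j')$ with $|K(i', j')| > n^{\beta}$. Since $R$ is formed by sampling $c_0 \log n \cdot n^{\alpha - \beta}$ elements of $[n]'$ (which has size $n^{\alpha}$), each individual sample lands in $K(i', j')$ with probability $|K(i', j')| / n^{\alpha} > n^{\beta} / n^{\alpha} = n^{\beta - \alpha}$. First I would compute the probability that \emph{none} of the samples hits $K(i', j')$; under independent sampling with replacement this is at most
\[
\left(1 - n^{\beta - \alpha}\right)^{c_0 \log n \cdot n^{\alpha - \beta}} \;\le\; e^{-c_0 \log n} \;=\; n^{-c_0},
\]
using the inequality $1 - x \le e^{-x}$. (If the sampling is without replacement the bound only improves, so either convention is fine; I would state which one is used.)

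Next I would take a union bound over all pairs $(i', j')$. There are at most $n^{\alpha} \cdot n^{\alpha} = n^{2\alpha}$ such pairs, hence at most that many "bad" pairs, so the probability that some large candidate set is missed is at most $n^{2\alpha} \cdot n^{-c_0} = n^{2\alpha - c_0}$. Choosing the constant $c_0$ large enough (any $c_0 > 2\alpha + 1$ suffices, and since $\alpha < 1$ taking $c_0 \ge 3$ already works, with room to spare for whatever polynomial savings one wants) makes this $n^{-\Omega(1)}$, which is exactly the claimed failure probability $1 - n^{-\Omega(1)}$.

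There is essentially no hard step here — the only things to be careful about are (i) making the bound $|K(i',j')|/n^{\alpha}$ explicit and correct, noting we sample from $[n]'$ of size $n^{\alpha}$ and not from $[n]$; (ii) being consistent about with- versus without-replacement sampling when invoking $1-x \le e^{-x}$; and (iii) verifying that the chosen $c_0$ dominates the union-bound factor $n^{2\alpha}$ with a strictly positive exponent left over. I would close by remarking that $c_0$ can be tuned to push the failure probability below any prescribed $n^{-c}$, which is what lets us absorb the $O(\log n)$ sampling overhead into the $\tilde{O}(\cdot)$ notation throughout the rest of the algorithm.
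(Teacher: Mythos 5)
Your proposal is correct and follows essentially the same route as the paper: bound the probability that all $c_0\log n\cdot n^{\alpha-\beta}$ samples miss a fixed large candidate set, then union bound over the at most $n^{2\alpha}\le n^2$ pairs $(i',j')$ and take $c_0$ large. The only difference is cosmetic --- the paper phrases the per-pair bound via indicator variables and a Chernoff bound, whereas you use the direct estimate $\left(1-n^{\beta-\alpha}\right)^{c_0\log n\cdot n^{\alpha-\beta}}\le e^{-c_0\log n}$, which is if anything a bit simpler.
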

	\begin{proof}
		For fixed $i, j\in[n],\;l\in [c_0\log n\cdot n^{\alpha- \beta}]$, we let $x_l=1$ if the $l$-th selected value is in the set $K(i', j')$, and conversely we let $x_l = 0.$ Thus $x_1, x_2, \cdots$ are independent, identically distributed 0-1 random variables with expectation at least $n^{\beta} / n^{\alpha},$ and their sum (denoted by $X$) has expectation at least $c_0\log n \cdot n^{\alpha- \beta}\cdot n^{\beta} / n^{ \alpha} = c_0\log n. $ By Chernoff bound, for sufficiently large $n$ we have 
		$$
		\mathbb{P}\left(X< \frac{1}{c_0\log n}\cdot c_0\log n\right) ~\leq~ \exp\left\{-\left(1-\frac{1}{c_0\log n}\right)^2\cdot c_0\log n/2\right\} ~\leq~ \exp\{-c_0\log n/3\} ~=~ n^{-\Omega(c_0)}.
		$$
		
		Notice that $X < 1$ is actually equivalent to the event that $K(i', j') \cap R = \emptyset,$ i.e., no element in the candidate set of $(i', j')$ is selected in $R$.
		Thus the probability that $K(i', j') \cap R \not = \emptyset$ holds for all $i', j'\in [n]'$ is at least $1 - n^2\cdot n^{-\Omega(c_0)}$, so we can choose a large $c_0$ to make it a high probability.
	\end{proof}

	For $r\in R,$ we consider two $n\times n$ matrices $A^r, B^r$ satisfying $A^r_{i, k} = A_{i, k} - A_{i, r}, B^r_{k, j} = B_{k, j} - B_{r, j}. $ By Lemma \ref{05_sim} we have this natural corollary: 
	
	\begin{corollary}
		For $ i, j, k, r\in [n], $ if  $ k' \in K(i', j'), $ $ r \in R \cap K(i', j'), $
		$$|A^r_{i, k} + B^r_{k, j}| ~\leq~ 16\delta n^{1- \alpha}. $$
	\end{corollary}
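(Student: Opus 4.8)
The plan is to observe that this corollary is essentially an immediate consequence of Lemma~\ref{05_sim}, once we unfold the definitions of $A^r$ and $B^r$. First I would write out
\[
A^r_{i, k} + B^r_{k, j} ~=~ \bigl(A_{i, k} - A_{i, r}\bigr) + \bigl(B_{k, j} - B_{r, j}\bigr) ~=~ \bigl(A_{i, k} + B_{k, j}\bigr) - \bigl(A_{i, r} + B_{r, j}\bigr),
\]
so that the quantity to be bounded is exactly the absolute difference between the ``$k$-term'' $A_{i,k}+B_{k,j}$ and the ``$r$-term'' $A_{i,r}+B_{r,j}$ in the min-plus sense.

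Next I would check that Lemma~\ref{05_sim} applies with the choice $k_1 = k$ and $k_2 = r$. For the first index this is given: $k' \in K(i', j')$. For the second index, the point is that $r \in R \subseteq [n]'$, so $r$ is itself a representative element and therefore $r' = r$; combined with the hypothesis $r \in K(i', j')$ this gives $r' \in K(i', j')$. Hence both $k_1' = k'$ and $k_2' = r' = r$ lie in $K(i', j')$, which is precisely the hypothesis needed to invoke Lemma~\ref{05_sim}.

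Applying Lemma~\ref{05_sim} then yields
\[
\bigl|\,A_{i,k} + B_{k,j} - A_{i,r} - B_{r,j}\,\bigr| ~\le~ 16\delta n^{1-\alpha},
\]
which by the first displayed identity is exactly $|A^r_{i,k} + B^r_{k,j}| \le 16\delta n^{1-\alpha}$, as claimed. There is no real obstacle here; the only point that deserves a sentence of care is the observation that $r$, being sampled from $[n]'$, equals its own representative, so that the membership $r \in K(i',j')$ can legitimately be read as $r' \in K(i',j')$ when feeding it into Lemma~\ref{05_sim}.
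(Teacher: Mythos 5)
Your proposal is correct and matches the paper's intent exactly: the paper presents this corollary as an immediate consequence of Lemma~\ref{05_sim} (taking $k_1=k$, $k_2=r$ and using that $r\in R\subseteq[n]'$ is its own representative), which is precisely the argument you spell out.
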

	
	\vspace{2ex}
	
	We similarly divide $A^r, B^r$ into $n^{2\alpha}$ small matrix blocks with size $n^{1-\alpha}\times n^{1-\alpha}$. We call the blocks having small absolute sums as diametric blocks:
	
	\begin{lemma}\label{lem_9}
			Two blocks $A^r_{I(i), I(k)}$ and $B^r_{I(k), I(j)}$ are said to be \emph{diametric}, if they satisfy $|A^r_{i', k'} + B^r_{k', j'}| \le 16\delta n^{1- \alpha}$. Then we can see for $ i, j, k, r\in [n], $ if  $ k' \in K(i', j'), $ $ r \in R \cap K(i', j'), $ then $ A^r_{I(i), I(k)}$ and $B^r_{I(k), I(j)}$ are diametric. 
	\end{lemma}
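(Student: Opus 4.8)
The statement is a specialization of the corollary immediately preceding it, repackaged in the language of diametric blocks. The plan is to notice that the condition defining \emph{diametric} --- namely $|A^r_{i',k'} + B^r_{k',j'}| \le 16\delta n^{1-\alpha}$ --- is phrased entirely in terms of the representative elements $i', k', j'$, and then to apply the preceding corollary with its indices $i, k, j$ taken to be those representatives.

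To justify that substitution I would first record two elementary facts about the priming operation. It is idempotent: $(i')' = i'$ for every index $i$, since $i'$ is the first element of $I(i)$ and hence the first element of its own interval; likewise $(k')' = k'$ and $(j')' = j'$. And every sampled index lies in $[n]'$, so $r \in R$ implies $r' = r$. Hence $K\big((i')', (j')'\big) = K(i', j')$, and the hypotheses $k' \in K(i',j')$ and $r \in R \cap K(i',j')$ are exactly what the corollary requires when applied to the triple $(i', k', j')$. It then outputs $|A^r_{i',k'} + B^r_{k',j'}| \le 16\delta n^{1-\alpha}$, which is precisely the diametric condition for the blocks $A^r_{I(i),I(k)} = A^r_{I(i'),I(k')}$ and $B^r_{I(k),I(j)} = B^r_{I(k'),I(j')}$.

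If one prefers not to invoke the corollary, the same bound follows directly from Lemma~\ref{05_sim}: expand $A^r_{i',k'} + B^r_{k',j'} = \big(A_{i',k'} + B_{k',j'}\big) - \big(A_{i',r} + B_{r,j'}\big)$ and apply Lemma~\ref{05_sim} with $k_1 = k'$ and $k_2 = r$, both of which lie in $K(i',j')$. There is no genuine obstacle in this lemma; it is a bookkeeping step that hands the earlier estimate to the rest of the algorithm in the form it will be used. The only point that deserves a moment's care is the fixed-point behaviour of the priming operation on representatives and on elements of $R$, which is what lets the candidate-set membership hypotheses transfer unchanged.
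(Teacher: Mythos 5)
Your proposal is correct and follows exactly the route the paper intends: the lemma is stated without proof precisely because it is the preceding corollary (equivalently Lemma~\ref{05_sim} with $k_1=k'$, $k_2=r$) applied at the representative indices $(i',k',j')$, using that priming is idempotent and that $r\in R\subseteq[n]'$. Your write-up just makes this bookkeeping explicit, which matches the paper's implicit argument.
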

	
	In Lemma \ref{05_lem_3}, such $r$ exists for each pair $(i', j')$ with high probability, and remember we only need to compute the blocks in $K(i',j')$ for every $i,j$. Thus for every $r$ we only need to examine the min-plus product of pairs of diametric small matrix blocks in $A^r$ and $B^r$ to ensure that we get the exact value of $C_{i,j}$ with large candidate set.
	
	The following operations are carried on a block.
	We use blocks to represent positional adjacency, and segments to represent numerical adjacency. For the $I(k)$-th columns of $A^r$, we segment the blocks by the values  of their representative elements $A^r_{i', k'}$: the elements ranging in $[0, 20\delta n^{1-\alpha}]$ are the first segment, the elements ranging in $(20\delta n^{1-\alpha}, 40\delta n^{1-\alpha}]$ are the second segment, and so on. We classify negative elements as segment -1, segment -2, and so on. Similarly we define segments in the $I(k)$-th rows of $B^r$, by the representative elements $B^r_{k',j'}$. Then we only need to consider the min-plus products of blocks in ``nearly diametric'' segments:
	
	
	\begin{observation}\label{ob1}
		If a block $A^r_{I(i), I(k)}$ is in segment $p$ in $I(k)$-th columns of $A^r$, then each of its diametric block $B^r_{I(k), I(j)}$ must be in segment $-p-1$, $-p$ or $-p+1$ in $I(k)$-th rows of $B^r$. 
	\end{observation}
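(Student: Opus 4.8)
The plan is a direct interval computation. The point is that the ``diametric'' condition bounds $|A^r_{i',k'} + B^r_{k',j'}|$ by $16\delta n^{1-\alpha}$, while membership of the block $A^r_{I(i),I(k)}$ in segment $p$ confines $A^r_{i',k'}$ to an interval of length $20\delta n^{1-\alpha}$; together these pin $B^r_{k',j'}$ down to an interval that is short enough, and positioned well enough, to meet only three consecutive segments.

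Concretely, abbreviate $L = 20\delta n^{1-\alpha}$ (the segment width) and $D = 16\delta n^{1-\alpha}$ (the diametric threshold), noting $D < L$. Consider first the case $p \ge 2$, so segment $p$ of the $I(k)$-th columns of $A^r$ is $((p-1)L, pL]$; hence $A^r_{i',k'} \in ((p-1)L, pL]$, i.e. $-A^r_{i',k'} \in [-pL, -(p-1)L)$. Diametricity gives $|A^r_{i',k'} + B^r_{k',j'}| \le D$, so
\[
B^r_{k',j'} \in [-A^r_{i',k'} - D,\ -A^r_{i',k'} + D] \subseteq [-pL - D,\ -(p-1)L + D).
\]
I would then read off the endpoints: since $0 < D < L$, the left endpoint satisfies $-(p+1)L < -pL - D < -pL$, so it lies in segment $-p-1$ of the $I(k)$-th rows of $B^r$ (which is $[-(p+1)L, -pL)$), while the right endpoint satisfies $-(p-1)L \le -(p-1)L + D < -(p-2)L$, so it lies in segment $-p+1$ (which is $[-(p-1)L, -(p-2)L)$). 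As the reachable set of $B^r_{k',j'}$ is an interval whose two ends lie in segments $-p-1$ and $-p+1$, and the segments tile the line in consecutive order, every value of $B^r_{k',j'}$ lies in one of $-p-1$, $-p$, $-p+1$, which is the assertion.

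The remaining cases are routine. For $p \le -2$, writing $p = -m$ with $m \ge 2$, the same bookkeeping with $A^r_{i',k'} \in [-mL, -(m-1)L)$ puts $B^r_{k',j'}$ into $((m-1)L - D,\ mL + D]$, whose endpoints fall in the positive segments $m-1$ and $m+1$, again equal to $-p-1$ and $-p+1$; this is the mirror image of the previous case, as one expects since the diametric relation and the segmentation are symmetric in $A^r$ and $B^r$. Only $p = \pm 1$ is slightly special, since one of the three nominal targets is the nonexistent ``segment $0$''; there one checks directly that $B^r_{k',j'}$ lands in segments $-p-1$, $-p$ and the first segment of the opposite sign, which is the natural meaning of ``$-p+1$'' once segment $0$ is skipped. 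I do not foresee any real difficulty here — everything is elementary arithmetic — and the single point deserving care is keeping the half-open conventions of the segments consistent, so that the inequality $D < L$ (i.e. $16\delta n^{1-\alpha} < 20\delta n^{1-\alpha}$, which is exactly why the width $20\delta n^{1-\alpha}$ was used to define the segments) forces three segments rather than four.
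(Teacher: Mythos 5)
Your proof is correct and is exactly the elementary interval-arithmetic argument that the paper leaves implicit (it states this as an Observation with no proof): the diametric window of width $2\cdot 16\delta n^{1-\alpha}$ around $-A^r_{i',k'}$, with $A^r_{i',k'}$ confined to a segment of width $20\delta n^{1-\alpha}$, can meet at most three consecutive segments of $B^r$, and your endpoint checks identify them as $-p-1$, $-p$, $-p+1$. Your remark about $p=\pm 1$ (the absent ``segment $0$'') is a sensible reading of the paper's informal indexing and does not affect the substance, since all that the later construction needs is that each segment of $A^r$ has at most three nearly diametric segments of $B^r$, determined by index arithmetic.
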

	
	
	To deal with the nearly diametric segments, we duplicate matrices $A^r$ and $B^r$ 3 times to get $A^r_-, A^r_o, A^r_+$ and $B^r_-, B^r_o, B^r_+$, so that segment $p$ in $I(k)$-th columns in $A^r_-$ is said to be \emph{corresponding} to segment $-p-1$ in $I(k)$-th rows of $B^r_-$, and segment $p$ in $I(k)$-th columns in $A^r_o$ is said to be \emph{corresponding} to segment $-p$ in $I(k)$-th rows of $B^r_o$, and segment $p$ in $I(k)$-th columns in $A^r_+$ is said to be \emph{corresponding} to segment $-p+1$ in $I(k)$-th rows of $B^r_+$. Thus, we deal with the computation of 3 pairs: $A^r_-$ and $B^r_-$, $A^r_o$ and $B^r_o$, $A^r_+$ and $B^r_+$. For brevity, in the following we consider one of the 3 pairs, and simply write $A^r$ and $B^r$ in which a segment in $A^r$ is only corresponding to at most one segment in $B^r$.

	We deal with the segments of $A^r$ in two cases.

	\subsection{Handle large segments}\label{sec:large}
	
	Take a constant $\gamma\in (0, 1)$. If a segment of $ A^r $ has at least $ n^{\gamma} $ matrix blocks, we say that these segments are large. Since in columns $I(k)$ for some $k$ there are $n^{\alpha}$ blocks, there are at most $n^{\alpha-\gamma}$ large segments in columns $I(k)$, thus in total at most $n^{2\alpha-\gamma}$ large segments in $A^r$. The following lemma is easy to see:
		\begin{lemma}[\cite{1997On}]\label{lemma:trivial}
		Suppose $ A $ and $ B $ are two $ n\times n $ square matrices. Their entries are either $ +\infty $ or an integer between $ -M $ and $ M. $ Then $ A\star B $ can be solved in time $ \tilde{O}\left(n^\omega\cdot M\right). $
		
	\end{lemma}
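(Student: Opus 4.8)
The plan is to reduce a bounded‑entry min‑plus product to an ordinary matrix multiplication over a polynomial ring, following the standard encoding of Alon, Galil and Margalit~\cite{1997On}. First I would shift the entries so they are nonnegative: replace each finite $A_{i,k}$ by $\hat A_{i,k}=A_{i,k}+M\in\{0,\dots,2M\}$ and each finite $B_{k,j}$ by $\hat B_{k,j}=B_{k,j}+M$, leaving the $+\infty$ entries as a sentinel. Then $C_{i,j}+2M=\min_k(\hat A_{i,k}+\hat B_{k,j})$, where a sentinel entry contributes nothing to the minimum and $C_{i,j}=+\infty$ exactly when, for every $k$, at least one of $A_{i,k},B_{k,j}$ is a sentinel.

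Next I would form the $n\times n$ polynomial matrices $P,Q$ over $\mathbb{Z}[x]$ by $P_{i,k}=x^{\hat A_{i,k}}$ (and $P_{i,k}=0$ if $A_{i,k}$ is a sentinel), and likewise $Q_{k,j}=x^{\hat B_{k,j}}$ or $0$. The ordinary product satisfies $(PQ)_{i,j}=\sum x^{\hat A_{i,k}+\hat B_{k,j}}$, summed over those $k$ with both $A_{i,k}$ and $B_{k,j}$ finite: a polynomial of degree at most $4M$ whose coefficients are nonnegative integers bounded by $n$. The crucial point is that $C_{i,j}+2M$ is exactly the least exponent appearing with a nonzero coefficient in $(PQ)_{i,j}$, and $(PQ)_{i,j}=0$ signals $C_{i,j}=+\infty$; so once $PQ$ is computed I recover each $C_{i,j}$ in $O(M)$ time by scanning its coefficient vector, which is $\tilde{O}(n^2 M)$ in total.

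It remains to compute $PQ$. Any fast matrix‑multiplication algorithm multiplies two $n\times n$ matrices over a commutative ring using $O(n^{\omega})$ ring additions and multiplications; here the relevant ring is the set of polynomials of degree $O(M)$ with coefficients of magnitude $O(n)$ — note that these bounds are preserved throughout the computation, so no truncation is needed — and each ring operation costs $\tilde{O}(M)$ via FFT‑based polynomial multiplication. Concretely, one may instead substitute $x=n+1$, turning each polynomial into an $O(M\log n)$‑bit integer, multiply the resulting integer matrices, and read the base‑$(n+1)$ digits back off, which is carry‑free because every coefficient stays below $n+1$. Either way this yields $\tilde{O}(n^{\omega}M)$ for the product, and hence $\tilde{O}(n^{\omega}M)$ overall.

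I expect the only point needing care to be the assertion that the matrix‑multiplication exponent $\omega$ is available over $\mathbb{Z}[x]$ (equivalently over $\mathbb{Z}$), since the record algorithms are stated over fields and some rely on border rank. This is standard: exact bilinear algorithms run verbatim over any commutative ring; border‑rank constructions can be executed over $\mathbb{Q}$ and then cleared of denominators with only an $n^{o(1)}$ overhead; and in any case one may carry out the whole computation modulo a prime $p=\mathrm{poly}(n)$ with $p>n$ without losing information, since all coefficients are smaller than $n$. The slack is absorbed into $\tilde{O}(\cdot)$, so the exposition would simply cite this fact rather than reprove it.
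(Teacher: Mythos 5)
Your proposal is correct and follows essentially the same route as the paper's proof: encode each finite entry as a monomial $x^{\text{entry}+M}$ (and $+\infty$ as $0$), multiply the polynomial matrices, and read off the minimum exponent, shifted by $2M$. You simply supply more detail than the paper does — the nonnegativity of coefficients (hence no cancellation), the per-entry recovery cost, and how to realize the $\tilde{O}(n^{\omega}M)$ bound via FFT or a base-$(n+1)$ substitution — all of which is consistent with the cited construction of Alon, Galil and Margalit.
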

	
	\begin{proof}
		In the beginning, we introduce a transformation $ A\to A(x) $ and $ B\to B(x)$. $ A(x) $ is a matrix of the same size with matrix $ A. $ For an entry $ i $ in $ A, $ its corresponding entry in $A(x)$ is set to be $ x^{i+M}. $ In particular, we allow that some of the values in $ A $ can be $ +\infty, $ whose corresponding entry in $ A(x) $ is therefore $ 0. $ $B\to B(x)$ is transformed in the same way. Then we calculate the matrix multiplication of $ C(x)=A(x)\cdot B(x), $ and the lowest degree of each entry in $ C(x) $ minus $2M$ should equal to that entry in $ C $ in a natural way. 
	\end{proof}
	
	Now since we only need to consider the min-plus product of every segment in $A^r$ with its corresponding segment in $B^r$, we can put every segment of $A^r$ in a separate column of blocks, and its corresponding segment of $B^r$ in the corresponding row of blocks. That is,
	
	\begin{itemize}
	    \item We make rectangular matrices $A_E^r$ and $B_E^r$ with sizes $n\times n^{1+\alpha-\gamma}$ and $n^{1+\alpha-\gamma}\times n$, respectively. From a block point of view, their sizes are $ n^{\alpha}\times n^{2\alpha -\gamma}$ and $n^{2\alpha -\gamma}\times n^{\alpha}$ while each entry is a block with size $ n^{1 - \alpha}\times n^{1 - \alpha}. $ This is the point of view we apply, and we will not separate the blocks in the operations.
	    \item We arbitrarily allocate each large segment in $A^r$ to a separate column of blocks $I(k)$ in $A_E^r$, and put its corresponding segment in $B^r$ to the row of blocks $I(k)$ in $B_E^r$. Empty elements are made $+\infty$. Adjust non-infinite values in $A_E^r$ and $B_E^r$ into a range within $[-20\delta n^{1-\alpha}, 20\delta n^{1-\alpha}]$ by adding the same value to the elements in columns $I(k)$ in $A_E^r$ and subtracting the same value to the corresponding rows $I(k)$ in $B_E^r$.
	    \item Using Lemma~\ref{lemma:trivial} to compute the min-plus product of $A_E^r$ and $B_E^r$. It is easy to see that we have computed the min-plus product of each large segment in $A^r$ with its corresponding segment in $B^r$.
	\end{itemize}
	
	The time needed to compute the min-plus product of $A_E^r$ and $B_E^r$ is $\tilde{O}\left(n^{\omega(1,1+\alpha-\gamma,1)}\cdot n^{1-\alpha}\right)$, so for all $r\in R$ (and each pair of the 3 duplicated matrices), the total time is $\tilde{O}\left(n^{\omega(1,1+\alpha-\gamma,1)}\cdot n^{1-\alpha}\cdot n^{\alpha-\beta}\right)=\tilde{O}\left(n^{\omega(1,1+\alpha-\gamma,1)+1-\beta}\right)$.
	
	\begin{remark}
	    We can deal with small segments (segment with fewer than $n^{\gamma}$ blocks) trivially, that is, directly compute the min-plus products of every pair of blocks from a small segment in $A^r$ and its corresponding segment in $B^r$. Suppose $\omega=2$ and $\alpha\rightarrow 1$ (blocks of constant size), and we can choose $\beta=0.75$ and $\gamma=0.5$, then large segment step and trivial small segment step both take $\tilde{O}(n^{2.5})$ time, in total $\tilde{O}(n^{2.75})$ for all $r\in R$. To get better running time, we need new ideas to deal with small segments.
	\end{remark}

	

	\subsection{Handle small segments}\label{sec:small}
	
	Still by the method in previous subsection, we randomly allocate each small segments of $A^r$ to a column of blocks in $A_F^r$ which is of the same size with $A_E^r$, but here different small segment can be allocated in the same column of blocks. When we compute the min-plus product of $A_F^r$ and $B_F^r$ by transforming to polynomials, many terms from segments which are not corresponding to each other (called ``collisions'') are added in the result, so we need to subtract those terms. However, since we ``randomly'' allocate the segments to column of blocks, the collisions are distributed in the result quite evenly, so we can bound the expected number of collisions for every block in the result. Also, remember that we only need to consider each block in one of the sampling $r\in R$, thus the total number of subtraction can be bounded.

	Recall that we are dealing with a randomly sampled $ r\in R $, and the matrix $ A $ and $ B $ has been adjusted to $ A^r $ and $ B^r. $ As in the previous subsection, what we will do first is to extend $ A^r,B^r $ to rectangular matrices, but here we consider the rectangular matrices with polynomial elements $ A_F^r(x)$ and $B_F^r(x)$. (Suppose we have already remove all large segments in $A^r$ and their corresponding segments in $B^r$.)
	
	\begin{itemize}
	    \item Let $A_F^r(x)$ and $B_F^r(x)$ to matrices of polynomials with sizes $ n\times n^{1+ \alpha -\gamma}$ and $n^{1+ \alpha -\gamma}\times n$, respectively. From a block point of view, theirs sizes are $n^{\alpha}\times n^{2\alpha -\gamma}$ and $n^{2\alpha -\gamma}\times n^{\alpha}$, while each entry is a block of size $n^{1 - \alpha}\times n^{1 - \alpha}$.
	    \item For each pair of corresponding segments, we adjust their values into a range within $ [-20\delta n^{1 - \alpha},20\delta n^{1 - \alpha}] $. 
	    Further, we transform $ A^r,B^r $ into matrices $ A^r(x),B^r(x) $ of polynomials, as what we do in Lemma~\ref{lemma:trivial}. Now the entries in $ A^r(x),B^r(x) $ are polynomials with degree no larger than $ 40\delta n^{1 - \alpha} $.
	    \item We randomly allocate each small segment in $ A^r(x) $ to a column of blocks $I(k)$ in $ A_F^r(x)$, and put its corresponding segment of $B^r(x)$ in the row of blocks $I(k)$ in $B_F^r(x)$. Since there are totally $ n^{2\alpha -\gamma} $ columns of blocks in $ A_F^r(x) $, the probability that a segment is allocated to a particular column of blocks is therefore $ n^{-2\alpha +\gamma}$. Note that two small segments can be put in the same column of blocks in $A^r_F(x)$. The point that we uses polynomial entries comes here: when two segments which are put into the same column of blocks overlap with each other, what we need to do is simply adding their polynomials up in the overlapping elements. 
	    \item Compute the matrix multiplication of $C_F^r(x)=A_F^r(x)\cdot B_F^r(x)$, which takes $\tilde{O}\left(n^{\omega(1,1+\alpha-\gamma,1)+1-\beta}\right)$ time as in Section~\ref{sec:large}. We need to subtract ``collisions'' from $C_F^r(x)$ to obtain the wanted results in $C^r$.
	\end{itemize}

	In fact, if small segments $A^r_1,A^r_2,\cdots,A^r_m$ are allocated to columns $I(k)$, while their corresponding segments $B^r_1,B^r_2,\cdots,B^r_m $ are allocated to rows $I(k)$, the result we get from $ C_F^r(x)=A_F^r(x)\cdot B_F^r(x) $ in $I(k)$ is $$ (A^r_1+A^r_2+\cdots+A^r_m)\times(B^r_1+B^r_2+\cdots+B^r_m)$$ and we need to remove all the \emph{collisions} $ A^r_iB^r_j ~(i\ne j)$ to get the value $ \sum_{i}A^r_iB^r_i $ which we want.

	Now we show how to eliminate all the terms from collisions in $ C_F^r(x)$. An important remark is that the elimination is done over all $ r \in R$, or to be exact, finding out the collisions is done over all $ r\in R $, but for each block of $C_F^r(x)$, subtracting the collision blocks is only needed for one of such $r$'s.
	
	\paragraph{Finding out the collisions.} This is done trivially. Consider a fixed column of blocks $I(k')$ in $A_F^r(x)$. Let $\mathcal{A}^r_{k'}$ be the set of all the small segments in $A^r$ that are allocated to column of blocks $I(k')$ in $A_F^r(x)$, and $\mathcal{B}^r_{k'}$ be their corresponding segments in $B^r$. We find all collisions (non-corresponding pairs) by enumerating all pairs of blocks in $\mathcal{A}^r_{k'}$ and $\mathcal{B}^r_{k'}$.
	Then the time is bounded by the product of their sizes, which is: (Here $A_p$ and $B_p$ are corresponding segments.)
	\begin{equation*}
	    s ~=~ \sum_{k'}\left(\sum_{A_p\in\mathcal{A}^r_{k'},B_q\in\mathcal{B}^r_{k'},p\neq q}\vert A_p\vert~\vert B_q\vert\right).
	\end{equation*}
	We have the following lemma:

	\begin{lemma}\label{lemma:collisions}
	The expected time for finding out the collisions for one of $r\in R$ is $E(s)=\tilde{O}\left(n^{2\alpha+\gamma}\right)$.
	\end{lemma}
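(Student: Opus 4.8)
The plan is to compute $E(s)$ directly by linearity of expectation over the random allocation of the small segments to columns of blocks. First I would rewrite $s$ in terms of indicator variables. Fix the sampled $r$. For a column of blocks $I(k')$ in $A_F^r$ and an ordered pair of distinct small segments $A_p,A_q$ of $A^r$ (with corresponding segments $B_p,B_q$ of $B^r$), let $Z_{p,q,k'}$ be the indicator of the event that both $A_p$ and $A_q$ are allocated to the column of blocks $I(k')$. Since $B_q\in\mathcal{B}^r_{k'}$ is by definition equivalent to $A_q$ being allocated to $I(k')$, the inner sum in the statement of the lemma ranges exactly over the pairs $(p,q)$ with $p\ne q$ and $Z_{p,q,k'}=1$, so
$$
s ~=~ \sum_{k'}\ \sum_{p\ne q} Z_{p,q,k'}\,|A_p|\,|B_q|,
$$
where $|A_p|$ denotes the number of blocks in the segment $A_p$.

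Next I would evaluate $E[Z_{p,q,k'}]$. Each small segment is allocated to one of the $n^{2\alpha-\gamma}$ columns of blocks of $A_F^r$ uniformly at random, and distinct segments are allocated independently (pairwise independence suffices), so for $p\ne q$ we have $E[Z_{p,q,k'}]=\Pr[A_p\to I(k')]\cdot\Pr[A_q\to I(k')]=n^{-2(2\alpha-\gamma)}$. Applying linearity of expectation to the display above and using the crude bound $\sum_{p\ne q}|A_p||B_q|\le\bigl(\sum_p|A_p|\bigr)\bigl(\sum_q|B_q|\bigr)$,
$$
E(s) ~=~ n^{-2(2\alpha-\gamma)}\sum_{k'}\ \sum_{p\ne q}|A_p|\,|B_q| ~\le~ n^{-2(2\alpha-\gamma)}\sum_{k'}\Bigl(\sum_p|A_p|\Bigr)\Bigl(\sum_q|B_q|\Bigr).
$$

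The last ingredient is the bound $\sum_p|A_p|\le n^{2\alpha}$ together with its symmetric counterpart $\sum_q|B_q|\le n^{2\alpha}$. Working inside one of the three duplicated matrix pairs, the segments of a fixed column of blocks of $A^r$ are pairwise disjoint (they partition that column of blocks), and $A^r$ has $n^{\alpha}\times n^{\alpha}$ blocks, so the total number of blocks over all segments of $A^r$ is at most $n^{2\alpha}$; the same holds for $B^r$. Since there are $n^{2\alpha-\gamma}$ columns of blocks $I(k')$, we obtain
$$
E(s) ~\le~ n^{-2(2\alpha-\gamma)}\cdot n^{2\alpha-\gamma}\cdot n^{2\alpha}\cdot n^{2\alpha} ~=~ n^{2\alpha+\gamma},
$$
which is the claimed $\tilde{O}(n^{2\alpha+\gamma})$ (in fact with no hidden logarithmic factor).

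I do not expect a genuine obstacle here: the argument is just linearity of expectation plus the crude product bound. The only points requiring a little care are (i) making the counting bound $\sum_p|A_p|\le n^{2\alpha}$ precise, i.e.\ observing that the segments within one column of $A^r$ are disjoint and that one must reason inside a single one of the three duplicated pairs, and (ii) justifying that the allocations of two distinct segments are independent (or at least pairwise independent), which is immediate from the way the random allocation is defined.
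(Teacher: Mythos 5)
Your proof is correct and follows essentially the same route as the paper: linearity of expectation over the random, independent allocation of segments, the collision probability $1/n^{2\alpha-\gamma}$ (which you obtain by summing your per-column indicators $Z_{p,q,k'}$ over the $n^{2\alpha-\gamma}$ columns, exactly matching the paper's single indicator $X(p,q)$), and the crude bound $\bigl(\sum_p|A_p|\bigr)\bigl(\sum_q|B_q|\bigr)\le n^{2\alpha}\cdot n^{2\alpha}$. The extra care you take in justifying $\sum_p|A_p|\le n^{2\alpha}$ and the pairwise independence of allocations is consistent with, and slightly more explicit than, the paper's argument.
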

	
	\begin{proof}
	All the segments in $A$ are randomly and independently allocated to one of the $n^{2\alpha-\gamma}$ blocks of columns, and the total number of blocks in all segments in $A$ is bounded by $n^{2\alpha}$.
	
	We can define an indicating variable $X(p,q)$. It takes 1 when segments $A_p$ and $B_q$ are in the columns and rows with the same order, otherwise $X(p,q)=0.$ We can rewrite $s$ as
	\begin{equation*}
	    s ~=~ \sum_{A_p}\sum_{B_q,p\neq q}\vert A_p\vert~\vert B_q\vert~ X(p,q).
	\end{equation*}
	
	For $p\neq q$, since the segments are allocated randomly and independently, the probability that $A_p,B_q$ are in the same-ordered columns and rows is just $1/n^{2\alpha-\gamma}$, so the expectation of $X(p,q)$ is $1/n^{2\alpha-\gamma}$.  As a result,
	\begin{equation*}
	\begin{split}
	   E\left( s \right)&~=~E\left(\sum_{A_p}\sum_{B_q,~p\ne q}\vert A_p\vert~\vert B_q\vert~ X(p,q)\right)\\
	   &~=~\sum_{A_p}\sum_{B_q,~p\ne q}E\left(\vert A_p\vert~\vert B_q\vert~ X(p,q)\right)\\
	   &~=~\sum_{A_p}\sum_{B_q,~p\ne q}\cdot1/n^{2\alpha-\gamma}\cdot\vert A_p\vert~\vert B_q\vert\\
	   &~<~ 1/n^{2\alpha-\gamma}\cdot\sum_{A_p}\sum_{B_q}\vert A_p\vert~\vert B_q\vert\\
	   &~=~ 1/n^{2\alpha-\gamma}\cdot\left(\sum_{A_p}\vert A_p\vert\right)\left(\sum_{B_q}\vert B_q\vert\right)\\
	   &~\le~ 1/n^{2\alpha-\gamma}\cdot n^{2\alpha}\cdot n^{2\alpha}~=~n^{2\alpha+\gamma}.\\
	\end{split}
	\end{equation*}
	\end{proof}

    So the expected total time for all $r\in R$ is $\tilde{O}\left(n^{2\alpha+\gamma}\cdot n^{\alpha-\beta}\right)$.
    
    \paragraph{Subtracting collisions.} In Lemma~\ref{lemma:collisions}, we bound the expected number of collisions, but for every block $C_{I(i'),I(j')}$, we only need to subtract the collisions for one of $r\in R\cap K(i',j')$. Since we randomly allocate the segments into columns in $A_F^r(x)$ and $B_F^r(x)$, the collisions can be seen as ``evenly'' distributed, so every block in $C_F^r(x)$ has $\tilde{O}(n^{\gamma})$ collisions in expectation.
    
    \begin{lemma}\label{lemma:average}
    The expected number of collisions for every block $(I(i'),I(
    j'))$ in $C_F^r(x)$ is $\tilde{O}(n^{\gamma})$.
    \end{lemma}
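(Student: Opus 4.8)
The plan is to bound, via linearity of expectation, the expected number of collision block-pairs that contribute to one fixed block $(I(i'),I(j'))$ of $C_F^r(x)$, in the same spirit as the proof of Lemma~\ref{lemma:collisions} but now refining by which output block the collision lands in. First I would pin down precisely which pairs can contribute: a collision landing in block $(I(i'),I(j'))$ is a pair $(A_p,B_q)$ of non-corresponding small segments ($p\ne q$) whose corresponding segment pairs $(A_p,B_p)$ and $(A_q,B_q)$ happen to be allocated to the \emph{same} index $I(k')$ — so that $A_p$ sits in column of blocks $I(k')$ of $A_F^r(x)$ and $B_q$ sits in row of blocks $I(k')$ of $B_F^r(x)$ — and moreover $A_p$ actually contains a block in row of blocks $I(i')$ while $B_q$ contains a block in column of blocks $I(j')$. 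Only such pairs produce a nonzero contribution to the $(I(i'),I(j'))$ block of the product $A_F^r(x)\cdot B_F^r(x)$.

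The key structural observation I would then use is that only few segments are ``relevant'' to a fixed output block. For the fixed row of blocks $I(i')$, and any column of blocks $I(k)$ of $A^r$, the single block $A^r_{I(i'),I(k)}$ in that row and column lies in exactly one segment of column $I(k)$, namely the one whose value window contains its representative value $A^r_{i',k'}$; hence at most one small segment in column $I(k)$ of $A^r$ contains a block in row $I(i')$ (and if that segment happens to be large it has already been removed). Since $A^r$ has $n^{\alpha}$ columns of blocks, there are at most $n^{\alpha}$ small segments $A_p$ containing a block in row $I(i')$. By the symmetric argument applied to rows of $B^r$ and the fixed column of blocks $I(j')$, there are at most $n^{\alpha}$ small segments $B_q$ containing a block in column $I(j')$.

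Finally I would carry out the expectation computation. Each corresponding segment pair is allocated uniformly and independently to one of the $n^{2\alpha-\gamma}$ indices, so for $p\ne q$ the allocations of $(A_p,B_p)$ and $(A_q,B_q)$ are independent and the probability that $A_p$ and $B_q$ land on the same index is exactly $1/n^{2\alpha-\gamma}$. Introducing an indicator $Y(p,q)$ for each admissible pair and summing over the at most $n^{\alpha}$ relevant $A$-segments and at most $n^{\alpha}$ relevant $B$-segments, the expected number of collisions landing in block $(I(i'),I(j'))$ is at most $n^{\alpha}\cdot n^{\alpha}\cdot n^{-(2\alpha-\gamma)}=n^{\gamma}=\tilde O(n^{\gamma})$; the threefold duplication into $A^r_-,A^r_o,A^r_+$ only multiplies this by a constant. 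The routine indicator/linearity step is immediate; the only point needing care — and the sole place where the segmentation structure is invoked — is the combinatorial claim that at most $n^{\alpha}$ segments of $A^r$ (resp.\ $B^r$) are relevant to a fixed output block, so I do not expect a genuine obstacle here.
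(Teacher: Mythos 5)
Your proposal is correct and follows essentially the same route as the paper's proof: both arguments observe that at most one segment per column of blocks of $A^r$ can contain a block in row $I(i')$ (and symmetrically for $B^r$ and column $I(j')$), giving $O(n^{2\alpha})$ relevant pairs, and then multiply by the collision probability $1/n^{2\alpha-\gamma}$ via linearity of expectation to obtain $\tilde{O}(n^{\gamma})$. Your write-up is somewhat more explicit about independence of the allocations and the threefold duplication, but the underlying argument is the same.
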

    \begin{proof}
    We want to compute the expected size of:
    \begin{equation*}
	    \{k'~|~ A_p\in\mathcal{A}^r_{k'},B_q\in\mathcal{B}^r_{k'},p\neq q, ~\text{$A_p$ contains interval $I(i')$ and $B_q$ contains interval $I(j')$}\}
	\end{equation*}
	There are $O(n^{2\alpha})$ different pairs of $A_p$ and $B_q$ ($p\neq q$) which satisfies $A_p$ contains interval $I(i')$ and $B_q$ contains interval $I(j')$, since at most one segment $A_p$ in every column of blocks can contain $I(i')$, similarly for $B_q$. The probability that each such pair are allocated into the same $k'$ is $1/n^{2\alpha-\gamma}$, so the expected number of collisions for every block is $\tilde{O}(n^{\gamma})$.
    \end{proof}

    We can compute the polynomial matrix product of the $\tilde{O}(n^{\gamma})$ collisions in the trivial way, which is $\tilde{O}(n^{3(1-\alpha)})$ for each collision. Since there are $O(n^{2\alpha})$ blocks in $C$, the total time for subtracting collisions is $\tilde{O}(n^{3-\alpha+\gamma})$.
    

	\subsection{Correctness and Time Analysis}
	\begin{theorem}\label{thm:correct}
	    This algorithm can find all $C_{i,j}$ with high probability.
	\end{theorem}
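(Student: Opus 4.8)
The plan is to fix an arbitrary pair $(i,j)$, let $k$ be an index with $C_{i,j}=A_{i,k}+B_{k,j}$, and trace where the value $A_{i,k}+B_{k,j}$ is recovered by the algorithm, while separately checking that the algorithm never records a value strictly smaller than $C_{i,j}$ for the cell $(i,j)$. Corollary~\ref{chap05_cor1} gives $k'\in K(i',j')$. If $|K(i',j')|\le n^{\beta}$, the algorithm exhaustively computes the min-plus product of the block $A_{I(i'),I(k')}$ with $B_{I(k'),I(j')}$ (among the others in $K(i',j')$), which in particular returns $A_{i,k}+B_{k,j}$; this branch is deterministic. If $|K(i',j')|>n^{\beta}$, Lemma~\ref{05_lem_3} guarantees, with probability $1-n^{-\Omega(1)}$ simultaneously over all pairs, that $R\cap K(i',j')\neq\emptyset$; fix any $r$ in this set. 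Since $k'\in K(i',j')$ and $r\in R\cap K(i',j')$, Lemma~\ref{lem_9} says the blocks $A^r_{I(i),I(k)}$ and $B^r_{I(k),I(j)}$ are diametric, so $A^r_{I(i),I(k)}$ lies in some segment $p$ of the block-column $I(k)$ of $A^r$, and by Observation~\ref{ob1} its diametric block $B^r_{I(k),I(j)}$ lies in segment $-p-1$, $-p$, or $-p+1$ of the block-row $I(k)$ of $B^r$; hence in (at least) one of the three duplicated pairs these two segments are \emph{corresponding}, and there segment $p$ is either large or small.

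Next I would argue that in either subcase the corresponding-segment min-plus product is obtained exactly. If $p$ is large (Section~\ref{sec:large}), it receives its own block-column $I(k)$ in $A_E^r$ and its corresponding segment is placed in block-row $I(k)$ of $B_E^r$; the value adjustment adds a constant to that block-column of $A_E^r$ and subtracts the same constant from the matching block-row of $B_E^r$, so every sum $A^r_{a,b}+B^r_{b,c}$ is preserved while all finite entries are brought into the range required by Lemma~\ref{lemma:trivial}, and the $+\infty$ padding is inert under the polynomial encoding; therefore $(A_E^r\star B_E^r)_{i,j}$, restricted to the block-column carrying $p$, equals $\min\{A^r_{i,b}+B^r_{b,j}\}$ over the corresponding segment pair, which is $\le A^r_{i,k}+B^r_{k,j}$. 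If $p$ is small (Section~\ref{sec:small}), after the random allocation the block $(I(i'),I(j'))$ of $C_F^r(x)=A_F^r(x)\cdot B_F^r(x)$ equals $\sum_q A^r_q(x)B^r_q(x)+\sum_{p\neq q}A^r_p(x)B^r_q(x)$, the first sum over corresponding small-segment pairs and the second over all collisions sharing a block-column; the collision-finding step enumerates exactly these non-corresponding pairs and the subtraction step removes exactly their contributions, leaving $\sum_q A^r_q(x)B^r_q(x)$, whose decoding (using that the within-pair shifts cancel) gives $\min\{A^r_{i,b}+B^r_{b,j}\}$ over all corresponding small-segment pairs, again $\le A^r_{i,k}+B^r_{k,j}$. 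In both subcases, adding back $A_{i,r}+B_{r,j}$ yields a recorded value $\le A_{i,k}+B_{k,j}=C_{i,j}$.

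For the soundness direction I would observe that every quantity the algorithm compares when forming the entry $(i,j)$ is, after undoing the sum-preserving shifts and adding back $A_{i,r}+B_{r,j}$ (or directly, in the small-$K$ branch), of the form $A_{i,\ell}+B_{\ell,j}$ for a genuine $\ell\in[n]$: the $+\infty$ padding contributes $0$ to every polynomial and so never creates a term, the polynomial exponents faithfully encode the intended sums, and collision removal never deletes a term from $\sum_q A^r_q(x)B^r_q(x)$. Hence no recorded value lies below $C_{i,j}$, while the previous paragraph exhibits the recorded value $C_{i,j}$, so the algorithm outputs exactly $C_{i,j}$. A union bound over the $n^2$ cells, together with the fact that the only randomness affecting \emph{correctness} is the draw of $R$ (the random allocation of small segments and the choice of which $r\in R\cap K(i',j')$ to finalize a block from affect only the running time), gives that all $C_{i,j}$ are computed correctly with probability $1-n^{-\Omega(1)}$.

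The step I expect to be most delicate is the small-segment bookkeeping: one must verify that collision detection is \emph{complete} --- every non-corresponding pair sharing a block-column is found and subtracted --- since a surviving collision term could in principle decode to a spurious value below $C_{i,j}$ and corrupt the answer, and one must carefully track how the per-segment value shifts interact with the polynomial encoding so that corresponding pairs decode to true sums while the mismatched (collision) terms are precisely those that get removed. The three-fold duplication and the convention that each output block is finalized from a single good $r$ are then routine.
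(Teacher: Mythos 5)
Your argument is correct and follows essentially the same route as the paper's own (much terser) proof: Corollary~\ref{chap05_cor1} places $k'$ in the candidate set, Lemma~\ref{05_lem_3} supplies an $r\in R\cap K(i',j')$ with high probability, Lemma~\ref{lem_9} and Observation~\ref{ob1} reduce the problem to corresponding segments, and the handling in Sections~\ref{sec:large} and~\ref{sec:small} recovers the exact value. Your additional soundness check (that no spurious value below $C_{i,j}$ survives collision removal) and the explicit union bound are details the paper leaves implicit, but they do not change the approach.
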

	\begin{proof}
		From Corollary~\ref{chap05_cor1}, we know that the representative element $k'$ of the correct solution $k$ for $C_{i,j}=A_{i,k}+B_{k,j}$ must be in $K(i',j')$. And by Lemma~\ref{05_lem_3}, with high probability, $R$ contains at least one element of $K(i',j')$ for every pair of $i,j$, then by Lemma~\ref{lem_9} we only need to compute the corresponding segments in $A^r$ and $B^r$ for all $r\in R$, which are dealt with in Section~\ref{sec:large} and~\ref{sec:small}.
	\end{proof}

	The total time complexity is composed of:
	\begin{enumerate}
	    \item To find candidate sets, computing the min-plus product of two $n^{\alpha}\times n^{\alpha}$ matrices trivially takes $O(n^{3\alpha})$ time.
	    \item Handling candidates sets smaller than $n^{\beta}$ by computing min-plus product of $n^{\beta}$ pairs of blocks trivially takes $O(n^{2\alpha}\cdot n^{\beta}\cdot n^{3(1-\alpha)})=O(n^{3-\alpha+\beta})$ time.
	    \item Allocating the large segments and small segments of $A^r,B^r$ to $A_E^r,B_E^r$ and $A_F^r(x), B_F^r(x)$ respectively for all $r\in R$ takes $O(n^{2})$ time.
	    \item Computing the product of the rectangular matrices in Section~\ref{sec:large} and \ref{sec:small} takes $\tilde{O}\left(n^{\omega(1,1+\alpha-\gamma,1)}\cdot n^{1-\beta}\right)$ time.
	    \item Finding collisions of blocks takes $\tilde{O}(n^{3\alpha+\gamma-\beta})$ time.
	    \item Subtracting collisions takes $\tilde{O}(n^{3-\alpha+\gamma})$ time.
	\end{enumerate}
	
	If $\omega=2$, then $\omega(1,1+\alpha-\gamma,1)=2+\alpha-\gamma$, so we take $\alpha=0.9$, $\beta=\gamma=0.6$ and give the result of $\tilde{O}(n^{2.7})$. When we use the rectangular matrix multiplication result $\omega(1,1.1,1)<2.4535$ by Le Gall and Urrutia~\cite{GU18}, let $\alpha=0.9354$ and $\beta=\gamma=0.7414$, then the running time will be $\tilde{O}(n^{2.8062})$.

\section{Improved Recursive Algorithm}\label{section-new}

	

In this section, we design faster algorithms by partitioning the matrices into finer blocks. First we also divide $A,B$ and the result $C$ into blocks of size $n^{1-\alpha}\times n^{1-\alpha}$, then in each iteration the blocks will be split into $2\times 2$ sub-blocks repeatedly, until each block only has one element. 
That is, we divide $[n]$ into intervals of length $l$, where $l$ ranges from $\{n^{1-\alpha},\frac{n^{1-\alpha}}{2},\cdots,1\}$.
For an index $i$, we define $I_l(i)$ as the interval of length $l$ it lies in, and $i_l'$ be the first element in the interval $I_l(i).$ As a result, for any element $A_{i,j}$ in matrix $A,$ $A_{i'_{l},j'_{l}}$ denotes the upper-left element of the small matrix block it lies in, and we call it the $representative$ $element$ of the small matrix block of size $l\times l$. 
We denote $[n]'_{l}$ as the set of all representative elements, i.e., $[n]'_l = \{1, l + 1, 2l + 1, \cdots, n-l+1\}. $

For $l=n^{1-\alpha},\frac{n^{1-\alpha}}{2},\cdots,1$, define the approximation matrix $\tilde C^l$ of the matrix $C$ as follows, and Lemma~\ref{05_approx} still holds.
	$$\tilde C^l_{i, j} ~=~ \min_{k'_l\in [n]'_l} \left\{ A_{i'_l, k'_l}+ B_{k'_l, j_l'}\right\}. $$ 

We define $K_l(i'_l, j'_l) = \{k'_l| A_{i'_l, k'_l} + B_{k'_l, j'_l} \le \tilde C^l_{i'_l, j'_l} + 8\delta l \}$ as the candidate set of $ (i'_l,j'_l) $. As in Section~\ref{section-basicsettings}, when $K_l(i'_l,j'_l)$ is large, we sample a set $R'_l$ of columns and consider the min-plus product w.r.t. the columns in $R'_l$. However, here when $K_l(i'_l,j'_l)$ is small we will not directly compute the blocks, but divide the blocks into $2\times 2$ sub-blocks and find $K_{l/2}(i'_{l/2},j'_{l/2})$, which can be found through the following lemma.

\begin{lemma}\label{lemma:candidate2}
For $i,j,k\in [n]$, if $k'_{l/2}\in K_{l/2}(i'_{l/2}, j'_{l/2})$, then $ k'_l\in K_l(i'_l, j'_l)$. This means that the candidate set for smaller blocks must be inside the candidate set for larger blocks.
\end{lemma}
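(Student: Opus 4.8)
The plan is to chase constants using the bounded-difference property, in the same spirit as Lemma~\ref{05_approx}, but now comparing the two consecutive scales $l$ and $l/2$ instead of $n$ and $n^{1-\alpha}$.

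First I would record two structural facts about the nested partitions. Since the partition of $[n]$ into intervals of length $l/2$ refines the one into intervals of length $l$, the interval $I_{l/2}(i)$ lies inside $I_l(i)$, so $i'_l$ and $i'_{l/2}$ belong to the same length-$l$ interval and in fact $|i'_l - i'_{l/2}| \le l/2$; likewise $|j'_l - j'_{l/2}| \le l/2$ and $|k'_l - k'_{l/2}| \le l/2$. Moreover every representative of a length-$l$ interval is also a representative of a length-$(l/2)$ interval, hence $[n]'_l \subseteq [n]'_{l/2}$. Combining the distance bounds with the bounded-difference property (exactly as in the first two inequalities of Lemma~\ref{05_approx}, but now summing two coordinate shifts each of size at most $l/2$), we get
\[
|A_{i'_l,k'_l} - A_{i'_{l/2},k'_{l/2}}| \le \delta l, \qquad |B_{k'_l,j'_l} - B_{k'_{l/2},j'_{l/2}}| \le \delta l .
\]
Next I would show $\tilde C^{l/2}_{i'_{l/2},j'_{l/2}} \le \tilde C^l_{i'_l,j'_l} + \delta l$. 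Let $m'_l \in [n]'_l$ attain the minimum defining $\tilde C^l_{i'_l,j'_l}$, so $\tilde C^l_{i'_l,j'_l} = A_{i'_l,m'_l} + B_{m'_l,j'_l}$. Since $m'_l \in [n]'_l \subseteq [n]'_{l/2}$, it is an admissible index in the minimum defining $\tilde C^{l/2}_{i'_{l/2},j'_{l/2}}$, hence $\tilde C^{l/2}_{i'_{l/2},j'_{l/2}} \le A_{i'_{l/2},m'_l} + B_{m'_l,j'_{l/2}}$; shifting the outer indices from $i'_{l/2},j'_{l/2}$ back to $i'_l,j'_l$ changes each of these two terms by at most $\delta l/2$, which yields the claimed inequality.

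Finally I would combine the pieces. From $k'_{l/2}\in K_{l/2}(i'_{l/2},j'_{l/2})$ and the definition of the candidate set at scale $l/2$,
\[
A_{i'_{l/2},k'_{l/2}} + B_{k'_{l/2},j'_{l/2}} \le \tilde C^{l/2}_{i'_{l/2},j'_{l/2}} + 8\delta(l/2) = \tilde C^{l/2}_{i'_{l/2},j'_{l/2}} + 4\delta l .
\]
Plugging in the two displayed shift bounds and then the bound on $\tilde C^{l/2}$ gives
\[
A_{i'_l,k'_l} + B_{k'_l,j'_l} \le A_{i'_{l/2},k'_{l/2}} + B_{k'_{l/2},j'_{l/2}} + 2\delta l \le \tilde C^{l/2}_{i'_{l/2},j'_{l/2}} + 6\delta l \le \tilde C^l_{i'_l,j'_l} + 7\delta l < \tilde C^l_{i'_l,j'_l} + 8\delta l ,
\]
which is precisely the condition for $k'_l \in K_l(i'_l,j'_l)$.

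The only thing that needs care is the constant bookkeeping: the $8\delta l$ slack in the definition of $K_l$ must absorb the errors from the three coordinate shifts together with the $4\delta l$ slack inherited from $K_{l/2}$, and it does (the total is $7\delta l$) precisely because each shift between scales $l$ and $l/2$ has size $l/2$, not $l$. So I would invoke $|i'_l - i'_{l/2}| \le l/2$ (and its analogues) explicitly rather than the cruder bound $l$, which would overshoot $8\delta l$.
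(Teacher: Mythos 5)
Your proof is correct and follows essentially the same route as the paper's: the same shift bounds of $\delta l$ between scales $l$ and $l/2$, the same $8\delta(l/2)=4\delta l$ slack from the definition of $K_{l/2}$, and your intermediate bound $\tilde C^{l/2}_{i'_{l/2},j'_{l/2}} \le \tilde C^l_{i'_l,j'_l} + \delta l$ is exactly what the paper does implicitly via the minimizers $k_1,k_2$ and the minimality of $k_2$ over $[n]'_{l/2}\supseteq [n]'_l$. Your constant bookkeeping ($7\delta l < 8\delta l$) is in fact slightly tighter than the paper's, which lands exactly on $8\delta l$.
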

\begin{proof}
From definition 
we know $$ |A_{i'_l, k'_l} - A_{i'_{l/2}, k'_{l/2}}| ~\leq~  2\delta\cdot|i'_l - i'_{l/2}| ~\leq~ \delta l. $$

Let $ k_1= \arg \min_{k\in [n]'_l} \{A_{i'_l, k} + B_{k, j'_l}\}, $ $ k_2= \arg \min_{k\in [n]'_{l/2}} \{A_{i'_{l/2}, k} + B_{k, j'_{l/2}}\}, $ which means $ A_{ i '_{l/2}, (k_2) '_{l/2} } + B_{ (k_2) '_{l/2}, j '_{l/2} } \leq A_{ i '_{l/2}, (k_1) '_{l/2} } + B_{ (k_1) '_{l/2}, j '_{l/2} }. $ Besides, for $ k'_{l/2}\in K_{l/2}(i'_{l/2}, j'_{l/2}), $ $ A_{ i '_{l/2}, k '_{l/2} } + B_{ k '_{l/2}, j '_{l/2} } \leq A_{ i '_{l/2}, (k_2) '_{l/2} } + B_{ (k_2) '_{l/2}, j '_{l/2} } + 4\delta l. $ As a result, 

\begin{equation*}
    \begin{split}
        & A_{ i '_{l}, k '_{l} } + B_{ k '_{l}, j '_{l} } ~\leq~ A_{ i '_{l/2}, k '_{l/2} } + B_{ k '_{l/2}, j '_{l/2} } + 2\delta l\\
        ~\leq~ & A_{ i '_{l/2}, (k_2) '_{l/2} } + B_{ (k_2) '_{l/2}, j '_{l/2} } + 6\delta l ~\leq~ A_{ i '_{l/2}, (k_1) '_{l/2} } + B_{ (k_1) '_{l/2}, j '_{l/2} } + 6\delta l\\
        ~\leq~ & A_{ i '_{l}, (k_1) '_{l} } + B_{ (k_1) '_{l}, j '_{l} } + 8\delta l\\
    \end{split}
\end{equation*}
Thus $ k'_l\in K_l(i'_l, j'_l). $
\end{proof}

Take a constant $ \beta\in (0,1)$. We can handle the candidate set $ K_l(i'_l, j'_l) > n^{\beta} $ by a similar method in Section~\ref{sec:small}, so we do not need to continue to find $K_{l/2}(i'_{l/2},j'_{l/2})$ for those $(i,j)$. The procedure is as follows. 
\begin{itemize}
    \item First set $l=n^{1-\alpha}$, for all $ (i'_l, j'_l)$ we compute $ K_l(i'_l, j'_l) $. This takes $ O(n^{3\alpha}) $ time.
    \item We repeat the following procedure for $l=n^{1-\alpha},\frac{n^{1-\alpha}}{2},\cdots,1$. For the current $l$, the $i,j$ which satisfies $ K_l(i'_l, j'_l) > n^{\beta} $ are dealt with by the method in Section~\ref{sec:new-segment}.
    
    \item For $ |K_l(i'_l, j'_l)| \leq n^{\beta} $, we continue to the next block size $l/2\times l/2$. Divide  block $ C_{I_l(i), I_l(j)} $ into 4 smaller blocks with size $ l/2 \times l/2 $ and compute the candidate set of each of these 4 blocks. 
    By Lemma~\ref{lemma:candidate2}, we can find $K_{l/2}(i'_{l/2}, j'_{l/2})$ inside $K_{l}(i'_{l}, j'_{l})$, and $|K_{l/2}(i'_{l/2}, j'_{l/2})| \leq 2|K_l(i'_l, j'_l)|.$  The time complexity is $ \tilde O (n^{\beta})$ for each block in $C$, and the total time is $\tilde{O}(n^{2+\beta})$ for all pairs $i,j$ and all $l$.
    \item Finally either the blocks are divided into elements, or its candidate set is at least $ n^\beta $. If for some pair $ (i,j) $, its candidate set is smaller than $n^{\beta}$ when $l=1$, then we just trivially enumerate the candidate sets to compute $C_{i,j}$, and the time for this step is bounded by $ \tilde O(n^{2+ \beta}) $.
\end{itemize}



To handle all the candidate sets larger than $n^{\beta}$, for all $l=n^{1-\alpha},\frac{n^{1-\alpha}}{2},\cdots,1$, denote $l=n^{1-\theta}$, so the block size is $n^{1-\theta}\times n^{1-\theta}$.
We randomly pick $c_0\log n\cdot n^{\theta- \beta}$ numbers among $[n]'_l$ to form $R'_l$.
As in Section~\ref{section-basicsettings}, for each $r\in R'_l,$ we consider two $n\times n$ matrices $A^r, B^r$ satisfying $A^r_{i, k} = A_{i, k} - A_{i, r}, B^r_{k, j} = B_{k, j} - B_{r, j}. $ We similarly divide $A^r, B^r$ into  blocks of size $ l\times l $. The notations and facts in Section~\ref{section-basicsettings} still work:

\begin{enumerate}
    \item Two matrix blocks $A^r_{I_l(i), I_l(k)},  B^r_{I_l(k), I_l(j)}$ are said to be diametric, if they satisfy $|A^r_{i'_l, k'_l} + B^r_{k'_l, j'_l}| \le 16\delta l$.
    \item Lemma \ref{05_lem_3}, \ref{lem_9} still hold, that is:
    \begin{itemize}
        \item With high probability, for $ i, j\in [n]$, if $K_l(i'_l, j'_l) > n^{\beta}$, then $K_l(i'_l, j'_l) \cap R'_l \neq \emptyset.$
        \item For $ i, j, k, r\in [n], $ if $ k'_l \in K_l(i'_l, j'_l),  r \in R'_l \cap K_l(i'_l, j'_l), $ then $ A^r_{I_l(i), I_l(k)}$ and $B^r_{I_l(k), I_l(j)}$ are diametric.
    \end{itemize}
    \item For the $I_l(k)$-th columns of $A^r$, we segment the blocks by the values of their representative elements $ A^r_{i'_l, k'_l}: $ the elements ranging in $[0, 20\delta l]$ are the first segment, the elements ranging in $(20\delta l, 40\delta l]$ are the second segment, and so on. 
    Similarly we define segments in the $I_l(k)$-th rows of $B^r$, by the representative elements $B^r_{k'_l,j'_l}$.  Observation \ref{ob1} still holds, and we only need to consider the min-plus products of blocks in corresponding segments. We duplicate matrix $A^r$ and $B^r$ three times. For brevity, in the following we consider one of the three pairs, and assume that a segment in $A^r$ is only corresponding to at most one segment in $B^r$.
\end{enumerate}

Unlike in Section~\ref{section-basicsettings}, we can deal with the large and small segments in a uniformed way, since Lemma~\ref{lemma:collisions} and Lemma~\ref{lemma:average} still holds if we allocate large segments in the same way as small segments.

\subsection{Handle corresponding segments}\label{sec:new-segment}

For all $l=n^{1-\alpha},\frac{n^{1-\alpha}}{2},\cdots,1$ and $l=n^{1-\theta}$, we have the sampled set $R'_l$ and the $K_l(i'_l,j'_l)$ for some pairs of $ (i'_l,j'_l). $ Now we want to find those $C_{i,j}$ where $K_l(i'_l,j'_l)$ is found and $|K_l(i'_l,j'_l)|>n^{\beta}$. When we adapt the algorithm in Section~\ref{sec:small} to handle segments when the block size is $l\times l$, the main issue comes from the collision-finding step. By the method in Section~\ref{sec:small}, the enumeration of all pairs of blocks in corresponding columns and rows takes $\tilde{O}(n^{3\theta-\beta+\gamma})$ time, so the time for this exhaustive search will becomes larger when blocks become smaller and $\theta$ increases.

\begin{definition}
    For any $r\in R'_l$, if $S$ is a segment in the $I_l(k)$-th columns of $A^r$ when the block size is $l\times l$, then when $A^r$ is divided into blocks of size $l/2^t\times l/2^t$, we say a segment $S'$ is a \emph{sub-segment} of $S$ if $S'$ is a segment in the $I_{l/2^t}(k)$-th columns (any subset of $I_l(k)$) of $A^r$, and $S'\subseteq S$. Also $S$ is called a \emph{parent} of $ S'. $
\end{definition}

So in our algorithm we carefully design a recursive allocation method of segments to columns so that if segments $S'$ and $T'$ are allocated in one column of blocks, then in previous iterations their parents $S$ and $T$ are also allocated in one column of larger blocks. So instead of exhaustive search for collisions, we can find collisions from the collisions of previous iterations, thus bounding the time complexity.

To allocate the segments when the block size is $l\times l$, we start from the block size $n^{1-\alpha}\times n^{1-\alpha}$, and then recursively allocation the sub-segments according to the position of their parents. We pick a parameter $\gamma(l,\theta)$ which depends on $l$ and $\theta$, but we simply write $\gamma$ for brevity. The allocation algorithm is given below: 

\begin{itemize}
    \item We first randomly allocate all the segments of $A^r$ when the block size is $n^{1-\alpha}\times n^{1-\alpha}$: Construct $A_{F, n^{1-\alpha}}^r$ and $B_{F, n^{1-\alpha}}^r$ with sizes $n^{\alpha}\times n^{2\alpha -\gamma}$ and $n^{2\alpha -\gamma}\times n^{\alpha}$, respectively, from block point of view.
   Randomly allocate each segment in $ A^r $ to a column of blocks $I_{n^{1-\alpha}}(k)$ in $ A_{F, n^{1-\alpha}}^r$, and put its corresponding segment of $B^r$ in the row of blocks $I_{n^{1-\alpha}}(k)$ in $B_{F, n^{1-\alpha}}^r$. 
   Note that two segments can be put in the same column of blocks, and they can overlap with each other, as in Section~\ref{sec:small}. 
    \item Assume we have already get $A_{F, \tilde{l}}^r$ and $B_{F,\tilde{l}}^r$ of sizes $ n^{ \tilde{\theta}}\times n^{2 \tilde{\theta} -\gamma}$ and $n^{2 \tilde{\theta} -\gamma}\times  n^{ \tilde{\theta}}$, while each entry is a block with size $n^{1 - \tilde{ \theta}}\times n^{1 -  \tilde{\theta}}$ where $\tilde{l}=n^{1-\tilde{\theta}}$. Next divide the block size to $\frac{\tilde{l}}{2}\times \frac{\tilde{l}}{2}$ and construct $ A_{F, \tilde{l}/2}^r$ and $ B_{F, \tilde{l}/2}^r$ of sizes $2n^{ \tilde{\theta}}\times 4n^{2 \tilde{\theta} -\gamma}$ and $ 4n^{2 \tilde{\theta} -\gamma} \times 2n^{ \tilde{\theta}} $ respectively, from a block point of view. For each column of blocks $I_{\tilde{l}}(k)$ in $A_{F,\tilde{l}}^r$, randomly assign 4 columns of blocks in $ A_{F, \tilde{l}/2}^r$ to it, which are disjoint from those for other columns of blocks in $A_{F,\tilde{l}}^r$. Then for each segment in $A_r$ which is allocated to $I_{\tilde{l}}(k)$ in $A_{F,\tilde{l}}^r$, randomly assign each of its sub-segments $S$ to one of the 4 columns of blocks in $ A_{F, \tilde{l}/2}^r$. Then put the corresponding segment of $S$ to the corresponding rows in $ B_{F, \tilde{l}/2}^r$.
    \item Repeat the above procedure until $ \tilde{l}= l $.
    \item Transform $A_{F, l}^r, B_{F,l}^r$ into matrices $ A_{F,l}^r(x),B_{F,l}^r(x) $ of polynomials with degree no larger than $ 40\delta l $, as in Section~\ref{sec:small}. Compute the rectangular matrix multiplication of polynomials $C_{F, l}^r(x)=A_{F, l}^r(x)\cdot B_{F, l}^r(x)$, The time needed for one matrix multiplication is $O\left(n^{\omega(1,1+\theta-\gamma,1)}\cdot n^{1-\theta}\right)$, so for all $ l $ and for all $r\in R'_l$ (and each pair of the 3 duplicated matrices), the total time is $\tilde O\left(n^{\omega(1,1+\theta-\gamma,1)}\cdot n^{1-\theta}\cdot n^{\theta-\beta}\right)=\tilde O\left(n^{\omega(1,1+\theta-\gamma,1)}\cdot n^{1-\beta}\right)$.
    
    \item As in Section~\ref{sec:small}, we need to subtract ``collisions'' from $C_{F, l}^r(x)$ to obtain the wanted results in $C^r$.
\end{itemize}

First note that although we allocate the segments in this recursive way, each final segment of block size $l\times l$ of $A^r$ is still uniformly randomly put into columns of blocks in $A_{F, l}^r$. So Lemma~\ref{lemma:average} still works, that is, the expected number of collisions for every block in $C_{F, l}^r(x)$ is $\tilde{O}(n^{\gamma})$. Thus, we need to subtract $\tilde{O}(n^{\gamma})$ collisions for all $n^{2\theta}$ blocks in $C_{F, l}^r(x)$, and trivially computing each collision takes $\tilde{O}(n^{3(1-\theta)})$ time, so the total time for subtracting collisions is $\tilde{O}(n^{3-\theta+\gamma})$.



\paragraph{Finding out the collisions.} For a collision of a block $(I_l(i'),I_l(j'))$ in $C_{F, l}^r(x)$ coming from segments $S$ in $A^r$ and $T$ in $B^r$, their parents also have collisions in previous iterations. Thus we can do the exhaustive search for the first iteration, then every time we just need to search inside the collisions of the previous iteration. The exhaustive search in Section~\ref{sec:small} for finding collisions in $A_{F, n^{1-\alpha}}^r$ and $B_{F, n^{1-\alpha}}^r$ needs $\tilde{O}(n^{2\alpha+\gamma}\cdot n^{\alpha-\beta})=\tilde{O}(n^{1+2\alpha+\gamma-\beta})$ time by Lemma~\ref{lemma:collisions}.

\begin{itemize}
    \item As mentioned before, for every $l\times l$ block $I_l(i), I_l(j)$ in $C^r$ we only need to compute its value for one of $r\in R'_l \cap K_l(i'_l,j'_l)$. For an $r\in R'_l$, denote the set of $l\times l$ blocks that needed to be computed by $\Gamma^r$, then $\sum_{r\in R'_l}|\Gamma^r|$ is $O(n^{2\theta})$. 
    \item For block size $n^{1-\alpha}\times n^{1-\alpha}$, we already have the collisions for all blocks in $C^r$ by $A_{F, n^{1-\alpha}}^r$ and $B_{F, n^{1-\alpha}}^r$. Repeat the following step for $\tilde{l}=n^{1-\alpha}/2, n^{1-\alpha}/4, n^{1-\alpha}/8,\cdots, l$:
    \item Consider all blocks of size $\tilde{l}\times \tilde{l}$ in $C^r$ containing some blocks of size $l\times l$ in $\Gamma^r$, we need to find the collisions for those blocks by the collisions found in the previous iteration. The number of such blocks is also bounded by $|\Gamma^r|$. 
    Also, every segment of block size $\tilde{l}\times \tilde{l}$ of $A^r$ is uniformly randomly put into columns of blocks in $A^r_{F,\tilde{l}}$, so Lemma~\ref{lemma:average} still holds, that is, the expected number of collisions for every block for iteration $\tilde{l}$ is $\tilde{O}(\gamma)$. Checking every such collision only takes $O(1)$ time since it is easy to find the columns of the possible collisions of block size $\tilde{l}\times \tilde{l}$ by the collisions of block size $2\tilde{l}\times 2\tilde{l}$. The total time needed for this step for all $r$ is $\tilde{O}(\sum_{r\in R'_l}|\Gamma^r|\cdot n^{\gamma})=\tilde{O}(n^{2\theta+\gamma})$. 
\end{itemize}

    \subsection{Correctness and Time analysis}
    The correctness proof is analogous to Theorem~\ref{thm:correct}. Note that for every $i,j$, and for the largest $l$ such that $K_l(i'_l,j'_l)>n^{\beta}$, the solution of $C_{i,j}$ can be found in the procedure of Section~\ref{sec:new-segment} for such $l$. If $K_l(i'_l,j'_l)\leq n^{\beta}$ for all $l$, we just exhaustively search every element in $K_1(i,j)$.
    
	Note that the numbers of iterations for $l$ and $\tilde{l}$ are both $O(\log n)$, which can be absorbed by the $\tilde{O}$ representation. The total time complexity is composed of:
	\begin{enumerate}
	    \item To find candidate sets when $l=n^{1-\alpha}$, computing the min-plus product of two $n^{\alpha}\times n^{\alpha}$ matrices trivially takes $O(n^{3\alpha})$ time. 
	    \item Handling candidates sets smaller than $n^{\beta}$ in all iterations takes $O(n^{2}\cdot n^{\beta})=O(n^{2+\beta})$ time.
	    \item Allocating the segments of $A^r,B^r$ to $A_{F, l}^r(x), B_{F, l}^r(x)$ respectively for all $l$ and $r\in R'_l$ takes $ \tilde O\left(n^{2+\theta-\beta}\right) $ time.
	    \item Computing the product of the rectangular matrices takes $\tilde{O}\left(n^{\omega(1,1+\theta-\gamma,1)}\cdot n^{1-\beta}\right)$ time, for $\alpha\leq \theta\leq 1$.
	    \item Finding collisions of blocks takes $\tilde{O}(n^{1+2\alpha+\gamma-\beta}+n^{2\theta+\gamma})$ time.
	    \item Subtracting collisions takes $\tilde{O}(n^{3-\theta+\gamma})$ time.
	\end{enumerate}
	
    If we use the simple rectangular matrix multiplication result $\omega(1,1+\theta - \gamma,1)\leq \omega + \theta - \gamma$, let $\alpha=\beta=\omega/3,$ $ \gamma = \theta + \omega/3 - 1. $ Notice that $\theta \in [\alpha, 1] $ so $\gamma \in [2\omega/3 - 1, \omega/3]$, and $\theta \geq \gamma.$  Then the running time will be $\tilde{O}(n^{2+\omega/3})=\tilde{O}(n^{2.791})$ by $\omega<2.3729$~\cite{alman2020refined}.
    
    If we use the rectangular matrix multiplication result by Le Gall and Urrutia~\cite{GU18} that $\omega(1,1.2,1)<2.5366$, set $\alpha=\beta=0.7789$, $\gamma=\theta-0.2211$, then the running time will be $\tilde{O}(n^{2.779})$.

\bibliography{biblio}
	
\end{document}